\numberwithin{equation}{section} \swapnumbers
\newtheorem{satz}{Satz}[section]
\newtheorem{theorem}[satz]{Theorem}
\newtheorem{proposition}[satz]{Proposition}
\newtheorem{lemma}[satz]{Lemma}
\newtheorem{assumption}[satz]{Assumption}
\newtheorem{definition}[satz]{Definition}
\newtheorem{remark}[satz]{Remark}
\newtheorem{example}[satz]{Example}
\newcommand{\bbr}{\mathbb{R}}
\newcommand{\bbe}{\mathbb{E}}
\newcommand{\bbn}{\mathbb{N}}
\newcommand{\bbp}{\mathbb{P}}
\newcommand{\bbq}{\mathbb{Q}}
\newcommand{\bbs}{\mathbb{S}}
\newcommand{\bbx}{\mathbb{X}}
\newcommand{\calc}{\mathscr{C}}
\newcommand{\cale}{\mathscr{E}}
\newcommand{\calf}{\mathscr{F}}
\newcommand{\calk}{\mathscr{K}}
\newcommand{\calm}{\mathscr{M}}
\newcommand{\calv}{\mathscr{V}}
\newcommand{\loc}{{\rm loc}}
\newcommand{\supp}{{\rm supp}}
\newcommand{\adm}{{\rm adm}}
\newcommand{\sfi}{{\rm sf}}
\newcommand{\la}{\langle}
\newcommand{\ra}{\rangle}
\newcommand{\bbI}{\mathbbm{1}}
\newcommand{\bdot}{\boldsymbol{\cdot}}
\newcommand{\IL}{[\![}
\begin{document}

\title[Exploiting arbitrage requires short selling]{Exploiting arbitrage requires short selling}
\author{Eckhard Platen \and Stefan Tappe}
\address{University of Technology Sydney, School of Mathematical and Physical Sciences, Finance Discipline Group, PO Box 123, Broadway, NSW 2007, Australia}
\email{eckhard.platen@uts.edu.au}
\address{Albert Ludwig University of Freiburg, Department of Mathematical Stochastics, Ernst-Zermelo-Stra\ss{}e 1, D-79104 Freiburg, Germany}
\email{stefan.tappe@math.uni-freiburg.de}
\date{12 September, 2022}
\thanks{The authors like to thank Michael Schmutz and Andreas Haier from the Swiss Financial Market Supervisory Authority FINMA for fruitful discussions. The authors are also grateful to Claudio Fontana and Constantinos Kardaras for invaluable comments and remarks. Stefan Tappe gratefully acknowledges financial support from the Deutsche Forschungsgemeinschaft (DFG, German Research Foundation) -- project number 444121509.}
\begin{abstract}
We show that in a financial market given by semimartingales an arbitrage opportunity, provided it exists, can only be exploited through short selling. This finding provides a theoretical basis for differences in regulation for financial services providers that are allowed to go short and those without short sales. The privilege to be allowed to short sell gives access to potential arbitrage opportunities, which creates by design a bankruptcy risk.
\end{abstract}
\keywords{arbitrage opportunity, short selling, supermartingale deflator, self-financing portfolio}
\subjclass[2020]{91B02, 91B70, 60G48}

\maketitle\thispagestyle{empty}

\section{Introduction}\label{sec-intro}

It seems to be a common believe that the exploitation of various forms of arbitrage requires short selling. Indeed, a number of well-known examples exist. However, providing a mathematically rigorous proof for such a statement turns out to be not trivial, as will be shown in this paper. 

There has been always a need for design and regulation of the activities of financial market participants. In particular, the granting of the privilege of short selling to some financial services providers raises the question whether there is a theoretical link between the potential exploitation of arbitrage opportunities and short selling. Hedge funds and investment banks are typical examples for such privileged financial services providers. Short selling occurs when an investor or financial services provider borrows a security and sells it on the open market, planning to buy it back later for less money. The risk of loss on a short sale is theoretically unlimited since the price of the underlying asset can climb to any level. Exploiting arbitrage opportunities can be interpreted as a service to the market because it can remove inconsistencies between asset prices. Market participants with large quantities of assets under management, as pension funds and mutual funds, should not engage in short selling because their risk of bankruptcy could destabilize the market and cause major economic damage. There are also a variety of other reasons for such market participants not to engage in short selling, including higher risk aversion of members, taxation and regulatory constraints aimed at protecting investors.

So far, the literature seems not to have clarified precisely the crucial link between short selling and exploiting arbitrage. Although there are papers dealing with financial markets under short sale constraints (see, for example \cite{Jouini-Kallal, Kardaras-Platen, Pulido, Jarrow-Larsson, JPP, Coculescu, Coculescu-D}), to the best of our knowledge, the literature has not emphasized so far in a mathematically precise manner the important fact that only through short selling one can exploit arbitrage opportunities. The goal of this paper is to fill this gap and show that in a financial market consisting of nonnegative semimartingales an arbitrage opportunity, provided it exists, can only be exploited through short selling. 

Before we provide the mathematical formulation of our results (see Theorems \ref{thm-main}--\ref{thm-main-3c}), we outline the stochastic framework that we consider in this paper; more details and the precise definitions can be found in Section \ref{sec-proof}. Let $\bbs = \{ S^i : i \in I \}$ be a financial market with an arbitrary, possibly infinite index set $I$, consisting of at least two primary security accounts. These accounts could be cum-dividend stocks, savings accounts and combinations of derivatives and other securities offering collateral. We assume that the primary security accounts are nonnegative semimartingales which cannot revive from bankruptcy, and that at least one of these semimartingales is strictly positive. Summing up, for what follows we make the following standing assumption.

\begin{assumption}\label{ass-standing}
Let $\bbs = \{ S^i : i \in I \}$ be a financial market consisting of nonnegative semimartingales which cannot revive from bankruptcy such that $S^i, S_-^i > 0$ for some $i \in I$.
\end{assumption}

Let $\bbp_{\sfi}(\bbs)$ be the space of all self-financing portfolios investing in finitely many assets, and let $\bbp_{\sfi}^{\delta \geq 0}(\bbs)$ be the convex cone of all such portfolios without short selling. We fix a deterministic finite time horizon $T \in (0,\infty)$. Then a portfolio $S^{\delta}$ is an \emph{arbitrage portfolio} if $S_0^{\delta} = 0$ and $S_T^{\delta} \in L_+^0 \setminus \{ 0 \}$, where $L_+^0$ is the convex cone of all equivalence classes of nonnegative random variables. This definition essentially says that an arbitrage portfolio generates under limited liability from zero initial capital some strictly positive wealth. If such an arbitrage portfolio does not exist, then the market satisfies \emph{No Arbitrage} (NA), which means that $\calk_0 \cap L_+^0 = \{ 0 \}$, where $\calk_0$ denotes the convex cone of all outcomes of self-financing portfolios starting at zero. Often, one assumes that an arbitrage portfolio is nonnegative (see, for example \cite{Platen}) or admissible (see, for example \cite{DS-94} or \cite{DS-book}). For the sake of generality, in our upcoming first result we do not assume that an arbitrage portfolio has to be nonnegative or admissible.

\begin{theorem}\label{thm-main}
For every arbitrage portfolio $S^{\delta} \in \bbp_{\sfi}(\bbs)$ we have $S^{\delta} \notin \bbp_{\sfi}^{\delta \geq 0}(\bbs)$.
\end{theorem}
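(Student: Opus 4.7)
I would prove the stronger statement: if $S^{\delta} \in \bbp_{\sfi}^{\delta \geq 0}(\bbs)$ satisfies $S_0^{\delta} = 0$, then $S^{\delta}$ is indistinguishable from the zero process. Since the terminal condition $S_T^{\delta} \in L_+^0 \setminus \{0\}$ for an arbitrage portfolio would then be violated, this immediately delivers the theorem by contradiction.

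Write $V := S^{\delta} = \sum_{i=1}^d \delta^i S^i$. The argument rests on two algebraic identities forced by the definition of a self-financing portfolio: the pointwise value $V_t = \sum_i \delta_t^i S_t^i$, and, since $\delta$ is predictable so that the jump of the self-financing stochastic integral reads $\Delta V_t = \delta_t \cdot \Delta S_t$, the left-limit version $V_{t-} = \sum_i \delta_t^i S_{t-}^i$. Combining $\delta^i \geq 0$ with the strict positivity of the primary security accounts (both $S^i > 0$ and $S^i_- > 0$) makes both expressions sums of non-negative terms, from which I would read off the crucial equivalence
\[
V_{t-}(\omega) = 0 \iff \delta_t^i(\omega) = 0 \text{ for every } i \in \{1, \ldots, d\}.
\]
In particular, on the random set $\{V_- = 0\}$ the strategy vanishes, so $\Delta V_t = \delta_t \cdot \Delta S_t = 0$, and hence $V_t = V_{t-} + \Delta V_t = 0$; in other words, the set where $V$ vanishes is left-absorbing.

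To close the proof I would deflate $V$ by a strictly positive supermartingale deflator $Y$ for the class $\bbp_{\sfi}^{\delta \geq 0}(\bbs)$---for instance $Y := 1/V^{*}$, where $V^{*}$ is the num\'{e}raire / growth-optimal long-only portfolio of Kardaras--Platen~\cite{Kardaras-Platen}, a tool which the abstract lists as a keyword and which Section \ref{sec-proof} presumably supplies. Then $YV$ is a non-negative supermartingale with $(YV)_0 = 0$, so $\bbe[Y_T V_T] \leq 0$; combined with $Y_T V_T \geq 0$ this forces $Y_T V_T = 0$ almost surely, and since $Y_T > 0$ we conclude $V_T = 0$ a.s., which is the desired contradiction.

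The step requiring most care is the left-limit identity $V_{t-} = \sum_i \delta_t^i S_{t-}^i$: it is essentially forced by predictability of $\delta$ together with the stochastic-integral representation of self-financingness, but its precise justification depends on the framework of Section \ref{sec-proof}. If instead one prefers a purely pathwise argument avoiding the deflator, the hard part becomes excluding the possibility that strictly positive values of $V$ accumulate from the right at $\tau := \inf\{t : V_t > 0\}$, since the left-absorbing property only gives $V_\tau = 0$ and does not guarantee that the infimum is attained---a subtlety the supermartingale-deflator route bypasses entirely.
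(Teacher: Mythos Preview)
Your proposal is correct and follows essentially the same route as the paper: reduce to nonnegative portfolios (immediate since $\delta \geq 0$ and $S^i > 0$), invoke the existence of a strictly positive supermartingale deflator for long-only portfolios coming from Kardaras--Platen~\cite{Kardaras-Platen}, and conclude $V_T = 0$ from the supermartingale inequality at the terminal time---exactly the content of the paper's Propositions~\ref{prop-ESMD-exists} and~\ref{prop-ESMD-NA}. Your preliminary pathwise/left-limit analysis is not needed for the argument (as you yourself observe), and the paper omits it entirely; the only technical point the paper adds that you leave implicit is a change of num\'{e}raire to put the market in the form required by \cite[Thm.~1.3]{Kardaras-Platen}, together with the local boundedness of $\delta$ (built into the definition of $\bbp_{\sfi}(\bbs)$) to ensure the deflated process is a genuine supermartingale rather than just a local one.
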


The above theorem states the fact that exploiting arbitrage requires short selling. Although this result seems intuitively plausible, the mathematical proof (provided in Section \ref{sec-proof}, also for the upcoming Theorems \ref{thm-main-2}--\ref{thm-main-3c}) is not obvious in the present semimartingale market framework. The proof is straightforward if the arbitrage portfolio goes negative in between (see Lemma \ref{lemma-neg-port}), or if we consider financial market models in discrete time with strictly positive primary security accounts (see Proposition \ref{prop-discrete}). However, in the present continuous time framework, where we also consider nonnegative arbitrage portfolios, deeper results from martingale theory are required for the proof.

Similar results also hold true for weaker forms of arbitrage opportunities than the arbitrage portfolios considered above, such as, so-called, arbitrages of the first kind or various types of free lunches, as defined later on. In the following results (Theorem \ref{thm-main-2}--\ref{thm-main-3b}) we will consider a sequence $(S^{\delta_n})_{n \in \bbn} \subset \bbp_{\sfi}(\bbs)$ of self-financing portfolios, and assume that $\bigcup_{n \in \bbn} \supp(\delta_n)$ is finite. Here, for every $n \in \bbn$ the support of $\delta_n$, denoted by $\supp(\delta_n)$, consists of all indices of those primary security accounts in which the trading strategy $\delta_n$ invests. We emphasize that the aforementioned condition that $\bigcup_{n \in \bbn} \supp(\delta_n)$ is finite is \emph{always} satisfied if the index set $I$ is finite; that is, the market $\bbs$ consists of finitely many assets.

A sequence $(S^{\delta_n})_{n \in \bbn} \subset \bbp_{\sfi}(\bbs)$ of self-financing portfolios is called an \emph{arbitrage of the first kind} if $S_0^{\delta_n} \downarrow 0$ and there exists a random variable $\xi \in L_+^0 \setminus \{ 0 \}$ such that $S_T^{\delta_n} \geq \xi$ for all $n \in \bbn$. Typically, the sequence $(S^{\delta_n})_{n \in \bbn}$ is assumed to be nonnegative; see, for example \cite{KKS}. If a nonnegative arbitrage of the first kind does not exist, then we have $p(\xi) > 0$ for each $\xi \in L_+^0 \setminus \{ 0 \}$, where
\begin{align*}
p(\xi) := \inf \{ S_0^{\delta} : S^{\delta} \in \bbp_{\sfi}(\bbs) \text{ such that } S^{\delta} \geq 0 \text{ and } S_T^{\delta} \geq \xi \}
\end{align*}
denotes the superreplication price of $\xi$. This just means that the market satisfies \emph{No Arbitrage of the First Kind} (NA$_1$). For the sake of generality, in our upcoming result we do not assume that an arbitrage of the first kind has to remain always nonnegative.

\begin{theorem}\label{thm-main-2}
Let $(S^{\delta_n})_{n \in \bbn} \subset \bbp_{\sfi}(\bbs)$ be an arbitrage of the first kind such that $\bigcup_{n \in \bbn} \supp(\delta_n)$ is finite. Then there exists an index $n_0 \in \bbn$ such that $S^{\delta_n} \notin \bbp_{\sfi}^{\delta \geq 0}(\bbs)$ for all $n \geq n_0$.
\end{theorem}

A sequence $(S^{\delta_n})_{n \in \bbn}$ of self-financing portfolios is called an \emph{asymptotic arbitrage of the first kind} if $S_0^{\delta_n} \downarrow 0$ and there exists a sequence of random variables $(\xi_n)_{n \in \bbn} \subset L_+^0$ such that $S_T^{\delta_n} \geq \xi_n$ for all $n \in \bbn$ and $\limsup_{n \to \infty} \bbp(\xi_n \geq 1) > 0$. If we even have $\limsup_{n \to \infty} \bbp(\xi_n \geq 1) = 1$, then one also calls it a \emph{strong asymptotic arbitrage of the first kind}; cf., e.g. \cite{Kabanov-Kramkov-1} or \cite{Kabanov-Kramkov-2}. Typically, the sequence $(S^{\delta_n})_{n \in \bbn}$ is assumed to be nonnegative; see, for example \cite{KKS}. If a nonnegative arbitrage of the first kind does not exist, then for every sequence $(S^{\delta_n})_{n \in \bbn}$ of nonnegative self-financing portfolios and every sequence of random variables $(\xi_n)_{n \in \bbn} \subset L_+^0$ such that $S_T^{\delta_n} \geq \xi_n$ for all $n \in \bbn$ we have $\lim_{n \to \infty} \bbp(\xi_n \geq 1) = 0$, or equivalently $\xi_n \overset{\bbp}{\to} 0$; cf. \cite[Lemma 5.10]{Platen-Tappe-tvs}. This just means that the market satisfies \emph{No Asymptotic Arbitrage of the First Kind} (NAA$_1$). It is well known that NAA$_1$ is equivalent to \emph{No Unbounded Profit with Bounded Risk} (NUPBR); see, for example \cite{KKS}. For the sake of generality, in our upcoming result we do not assume that an asymptotic arbitrage of the first kind has to be nonnegative.

\begin{theorem}\label{thm-main-2b}
Let $(S^{\delta_n})_{n \in \bbn} \subset \bbp_{\sfi}(\bbs)$ be an asymptotic arbitrage of the first kind such that $\bigcup_{n \in \bbn} \supp(\delta_n)$ is finite. Then we have $S^{\delta_n} \notin \bbp_{\sfi}^{\delta \geq 0}(\bbs)$ for infinitely many $n \in \bbn$.
\end{theorem}

Note that for a given asymptotic arbitrage of the first kind the sequence $(\bbp(\xi_n \geq 1))_{n \in \bbn}$ does not need to be convergent. In this paper, we suggest to call it a \emph{limit arbitrage of the first kind} if the above condition $\limsup_{n \to \infty} \bbp(\xi_n \geq 1) > 0$ is replaced by the stronger condition $\lim_{n \to \infty} \bbp(\xi_n \geq 1) > 0$, implicitly meaning that the limit exists. In the case of a limit arbitrage of the first kind, the previous result can be improved as follows.

\begin{theorem}\label{thm-main-2b-v2}
Let $(S^{\delta_n})_{n \in \bbn} \subset \bbp_{\sfi}(\bbs)$ be a limit arbitrage of the first kind such that $\bigcup_{n \in \bbn} \supp(\delta_n)$ is finite. Then there exists an index $n_0 \in \bbn$ such that $S^{\delta_n} \notin \bbp_{\sfi}^{\delta \geq 0}(\bbs)$ for all $n \geq n_0$.
\end{theorem}

We call a sequence $(S^{\delta_n})_{n \in \bbn}$ of self-financing portfolios with $S_0^{\delta_n} = 0$ for all $n \in \bbn$ a \emph{free lunch with vanishing risk} if there exist a random variable $\xi \in L_+^{\infty} \setminus \{ 0 \}$ and a sequence $(\xi_n)_{n \in \bbn} \subset L^{\infty}$ such that $\| \xi_n - \xi \|_{L^{\infty}} \to 0$ and $S_T^{\delta_n} \geq \xi_n$ for each $n \in \bbn$, where $L^{\infty}$ denotes the Banach space of all equivalence classes of bounded random variables, and where $L_+^{\infty}$ is the convex cone of all equivalence classes of nonnegative bounded random variables. Typically, the sequence $(S^{\delta_n})_{n \in \bbn}$ is assumed to consist of admissible portfolios, which can become negative but remain above a critical level; see, for example \cite{DS-94} or \cite{DS-book}. If an admissible free lunch with vanishing risk does not exist, then we have $\overline{\calc} \cap L_+^0 = \{ 0 \}$, where $\calc := (\calk_0 - L_+^0) \cap L^{\infty}$ with $\calk_0$ denoting the convex cone of all outcomes of admissible self-financing portfolios starting at zero, and where $\overline{\calc}$ denotes the closure of $\calc$ with respect to the norm topology on $L^{\infty}$. This just means that the market satisfies the \emph{No Free Lunch with Vanishing Risk} (NFLVR) condition. In our following result we do not assume that a free lunch with vanishing risk has to be admissible.

\begin{theorem}\label{thm-main-3}
Let $(S^{\delta_n})_{n \in \bbn} \subset \bbp_{\sfi}(\bbs)$ be a free lunch with vanishing risk such that $\bigcup_{n \in \bbn} \supp(\delta_n)$ is finite. Then there exists an index $n_0 \in \bbn$ such that $S^{\delta_n} \notin \bbp_{\sfi}^{\delta \geq 0}(\bbs)$ for all $n \geq n_0$.
\end{theorem}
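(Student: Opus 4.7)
The plan is to deduce Theorem~\ref{thm-main-3} as a short corollary of Theorem~\ref{thm-main}, by showing that for sufficiently large $n$ any $S^{\delta_n}$ that happened to lie in $\bbp_{\sfi}^{\delta\geq 0}(\bbs)$ would already be an arbitrage portfolio in the sense of Theorem~\ref{thm-main}, which is impossible. The key preliminary observation I would record is that every portfolio in $\bbp_{\sfi}^{\delta\geq 0}(\bbs)$ is automatically nonnegative: since the primary securities $S^1,\ldots,S^d$ are strictly positive and the holdings $\delta^i$ are nonnegative, $S_t^\delta=\sum_{i=1}^d \delta_t^i S_t^i\geq 0$ on $[0,T]$.

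Next I would use the $L^\infty$-convergence $\xi_n\to\xi$ to pin down $n_0$. Set $c:=\|\xi\|_{L^\infty}$, which is strictly positive because $\xi\in L_+^\infty\setminus\{0\}$. By definition of essential supremum, $\bbp(\xi>c/2)>0$. Pick $n_0$ so that $\|\xi_n-\xi\|_{L^\infty}\leq c/4$ for all $n\geq n_0$. Then on the event $\{\xi>c/2\}$ we have $\xi_n\geq\xi-c/4>c/4>0$ almost surely, so the positive part $\xi_n^+$ belongs to $L_+^0\setminus\{0\}$ for every $n\geq n_0$.

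Fix $n\geq n_0$ and suppose, for contradiction, that $S^{\delta_n}\in\bbp_{\sfi}^{\delta\geq 0}(\bbs)$. Combining $S_T^{\delta_n}\geq\xi_n$ (the free-lunch bound) with $S_T^{\delta_n}\geq 0$ (from the preliminary observation) yields $S_T^{\delta_n}\geq \xi_n^+\in L_+^0\setminus\{0\}$. Together with $S_0^{\delta_n}=0$ this exhibits $S^{\delta_n}$ as an arbitrage portfolio, so Theorem~\ref{thm-main} forces $S^{\delta_n}\notin\bbp_{\sfi}^{\delta\geq 0}(\bbs)$, contradicting the hypothesis. Hence $S^{\delta_n}\notin\bbp_{\sfi}^{\delta\geq 0}(\bbs)$ for all $n\geq n_0$.

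The substantive continuous-time, martingale-theoretic difficulty is entirely packaged into Theorem~\ref{thm-main}, so here the only delicate point is upgrading the $L^\infty$-convergence to the eventual nontriviality of the pointwise floor $\xi_n^+$; everything else is bookkeeping. The same recipe, incidentally, is what makes Theorem~\ref{thm-main-2} a one-line reduction, the difference being that for arbitrages of the first kind the floor $\xi$ is already fixed and strictly positive on a set of positive probability, bypassing the norm-convergence step entirely.
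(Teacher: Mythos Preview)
Your proof is correct and follows essentially the same route as the paper: both show that for $n$ large enough any $S^{\delta_n}$ lying in $\bbp_{\sfi}^{\delta\geq 0}(\bbs)$ is nonnegative with $S_T^{\delta_n}\geq\xi_n^+\in L_+^0\setminus\{0\}$, hence an arbitrage portfolio, and then invoke the core no-arbitrage result (the paper cites Proposition~\ref{prop-ESMD-NA}, you cite Theorem~\ref{thm-main}, which is the same thing once Proposition~\ref{prop-ESMD-exists} is in hand). Your threshold argument for the nontriviality of $\xi_n^+$ is a harmless variant of the paper's observation that $\|\xi_n^+-\xi\|_{L^\infty}\to 0$.

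One caveat on your closing aside: the ``same recipe'' does \emph{not} make Theorem~\ref{thm-main-2} a one-line reduction to Theorem~\ref{thm-main}. For an arbitrage of the first kind one has $S_0^{\delta_n}>0$ (merely $S_0^{\delta_n}\downarrow 0$), so no individual $S^{\delta_n}$ is an arbitrage portfolio and you cannot invoke Theorem~\ref{thm-main} term-by-term. The paper accordingly proves Proposition~\ref{prop-ESMD-NA-2} by a direct supermartingale argument with the ESMD, not by reduction to the arbitrage case.
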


A sequence $(S^{\delta_n})_{n \in \bbn}$ of self-financing portfolios with $S_0^{\delta_n} = 0$ for all $n \in \bbn$ is called a \emph{free lunch with bounded risk} if there exist a random variable $\xi \in L_+^{\infty} \setminus \{ 0 \}$ and a sequence $(\xi_n)_{n \in \bbn} \subset L^{\infty}$ such that $\xi_n \overset{w^*}{\to} \xi$ and $S_T^{\delta_n} \geq \xi_n$ for each $n \in \bbn$, where $\xi_n \overset{w^*}{\to} \xi$ denotes weak-star convergence in $L^{\infty}$; that is
\begin{align*}
\bbe[\xi_n \eta] \to \bbe[\xi \eta] \quad \text{for all $\eta \in L^1$.}
\end{align*}
Typically, the sequence $(S^{\delta_n})_{n \in \bbn}$ is assumed to consist of admissible portfolios; see, for example \cite{DS-94} or \cite{DS-book}. If an admissible free lunch with bounded risk does not exist, then the market satisfies \emph{No Free Lunch with Bounded Risk} (NFLBR), which means that $\widetilde{\calc} \cap L_+^0 = \{ 0 \}$, where $\widetilde{\calc}$ denotes the closure of $\calc$ with respect to the sequential weak-star topology on $L^{\infty}$. In our following result we do not assume that a free lunch with bounded risk has to be admissible.

\begin{theorem}\label{thm-main-3b}
Suppose that $(S_T^i)^{-1} \in L^{\infty}$, where the index $i \in I$ stems from Assumption \ref{ass-standing}. Furthermore, let $(S^{\delta_n})_{n \in \bbn} \subset \bbp_{\sfi}(\bbs)$ be a free lunch with bounded risk such that $\bigcup_{n \in \bbn} \supp(\delta_n)$ is finite. Then there exists an index $n_0 \in \bbn$ such that $S^{\delta_n} \notin \bbp_{\sfi}^{\delta \geq 0}(\bbs)$ for all $n \geq n_0$.
\end{theorem}

This result can be generalized as follows. Given an arbitrary directed set $(J,\leq)$, a net $(S^{\delta_j})_{j \in J}$ of self-financing portfolios with $S_0^{\delta_j} = 0$ for all $j \in J$ is called a \emph{free lunch} if there exist a random variable $\xi \in L_+^{\infty} \setminus \{ 0 \}$ and a net $(\xi_j)_{j \in J} \subset L^{\infty}$ such that $\xi_j \overset{w^*}{\to} \xi$ and $S_T^{\delta_j} \geq \xi_j$ for each $j \in J$, where $\xi_j \overset{w^*}{\to} \xi$ denotes weak-star convergence in $L^{\infty}$; that is
\begin{align*}
\bbe[\xi_j \eta] \to \bbe[\xi \eta] \quad \text{for all $\eta \in L^1$.}
\end{align*}
Typically, the net $(S^{\delta_j})_{j \in J}$ is assumed to consist of admissible portfolios; see, for example \cite{DS-book}. If an admissible free lunch does not exist, then the market satisfies \emph{No Free Lunch} (NFL), which means that $\overline{\calc}^* \cap L_+^0 = \{ 0 \}$, where $\overline{\calc}^*$ denotes the closure of $\calc$ with respect to the weak-star topology on $L^{\infty}$. In our following result we do not assume that a free lunch has to be admissible. Similarly as for the previous results, the following condition that $\bigcup_{j \in J} \supp(\delta_j)$ is finite is \emph{always} satisfied if the index set $I$ is finite; that is, the market $\bbs$ consists of finitely many assets.

\begin{theorem}\label{thm-main-3c}
Suppose that $(S_T^i)^{-1} \in L^{\infty}$, where the index $i \in I$ stems from Assumption \ref{ass-standing}. Furthermore, let $(S^{\delta_j})_{j \in J} \subset \bbp_{\sfi}(\bbs)$ be a free lunch such that $\bigcup_{j \in J} \supp(\delta_j)$ is finite. Then there exists an index $j_0 \in J$ such that $S^{\delta_j} \notin \bbp_{\sfi}^{\delta \geq 0}(\bbs)$ for all $j \geq j_0$.
\end{theorem}

Let us briefly outline the essential steps for the proofs of Theorems \ref{thm-main}--\ref{thm-main-3c}. A crucial concept for the proofs is that of an \emph{equivalent supermartingale deflator} (ESMD). First, we show that the existence of an ESMD implies the statements of Theorems \ref{thm-main}--\ref{thm-main-3c} (see Propositions \ref{prop-ESMD-NA}--\ref{prop-ESMD-NA-4}). In a second step, we show that in the present setting an ESMD always exists; see Proposition \ref{prop-ESMD-exists}. This result seems to be of independent interest. The main ideas for the proof include a change of num\'{e}raire, a result about ESMDs from \cite{Kardaras-Platen} and a $\sigma$-localization technique from \cite{Kallsen}.

We mention that supermartingale deflators and the related concept of a supermartingale measure have been used in various contexts in the mathematical finance literature; in particular in connection with short sale constraints, including \cite{Jouini-Kallal, Kardaras-Platen, Pulido, Jarrow-Larsson} and \cite{Coculescu}. The papers \cite{Kardaras-Platen} and \cite{Pulido} present versions of the fundamental theorem of asset pricing with short sale prohibitions. Further references, where supermartingales and supermartingale deflators are studied with a focus on applications in finance, include \cite{Zitkovic-BS, Kardaras-13} and the recent paper \cite{Harms}.

The remainder of this paper is organized as follows. In Section \ref{sec-proof} we provide the proofs of our results, in Section \ref{sec-discrete} we briefly comment on discrete time models, and in Section \ref{sec-examples} we present several examples and illustrate how our examples with arbitrage portfolios which go negative link to asset and money market bubbles.

\section{Proofs of the results}\label{sec-proof}

In this section we provide the proofs of Theorems \ref{thm-main}--\ref{thm-main-3c}. First, we introduce the precise mathematical framework. Let $T \in (0,\infty)$ be a fixed finite time horizon, and let $(\Omega,\calf,(\calf_t)_{t \in [0,T]},\bbp)$ be a stochastic basis satisfying the usual conditions; see \cite[Def. I.1.3]{Jacod-Shiryaev}. Furthermore, we assume that $\calf_0$ is $\bbp$-trivial. Then every $\calf_0$-measurable random variable is $\bbp$-almost surely constant. As mentioned in Section \ref{sec-intro}, we consider a financial market $\bbs = \{ S^i : i \in I \}$ consisting of nonnegative semimartingales for an index set $I$ with $|I| \geq 2$. Apart from the latter condition of having at least two elements, the index set $I$ is arbitrary; in particular, it may be infinite, and thus possibly even uncountable. We assume that for each $i \in I$ the semimartingale $S^i$ cannot revive from bankruptcy, which means that $S_t^i = 0$ for all $t \geq \tau^i$, where $\tau^i$ denotes the bankruptcy time of $S^i$ given by
\begin{align*}
\tau^i := \inf \{ t \in \bbr_+ : S_{t-}^i = 0 \text{ or } S_t^i = 0 \}.
\end{align*}
Furthermore, we assume that $S^i, S_-^i > 0$ for some $i \in I$.

For an $\bbr^d$-valued semimartingale $X$ we denote by $L(X)$ the set of all $X$-integrable processes in the sense of vector integration; see \cite{Shiryaev-Cherny} or \cite[Sec. III.6]{Jacod-Shiryaev}. For $\delta \in L(X)$ we denote by $\delta \bdot X$ the stochastic integral according to \cite{Shiryaev-Cherny}.

For a finite set $F \subset I$ we define the multi-dimensional semimartingale $S^F := (S^i)_{i \in F}$. We call a process $\delta = (\delta^i)_{i \in I}$ a \emph{strategy} for $\bbs$ if its \emph{support}
\begin{align}\label{support}
F := \supp(\delta) := \{ i \in I : \delta^i \neq 0 \}
\end{align}
is finite and we have $\delta^F \in L(S^F)$, where $\delta^F := (\delta^i)_{i \in F}$. We will also use the notation $\Delta(\bbs)$ for the space of all strategies for $\bbs$. For a strategy $\delta \in \Delta(\bbs)$ we define the \emph{portfolio} $S^{\delta} := \delta \cdot S$, where we use the short-hand notation $\delta \cdot S := \sum_{i \in F} \delta^i S^i$ with $F := \supp(\delta)$. A strategy $\delta \in \Delta(\bbs)$ and the corresponding portfolio $S^{\delta}$ are called \emph{self-financing} for $\bbs$ if $S^{\delta} = S_0^{\delta} + \delta \bdot S$, where $\delta \bdot S := \delta^F \bdot S^F$ is the stochastic integral according to \cite{Shiryaev-Cherny} with $F := \supp(\delta)$. We denote by $\Delta_{\sfi}(\bbs)$ the space of all self-financing strategies.

Let $\bbp_{\sfi}(\bbs)$ be the space of all self-financing portfolios $S^{\delta}$. We denote by $\bbp_{\sfi}^+(\bbs)$ the convex cone of all nonnegative self-financing portfolios $S^{\delta} \geq 0$, and we denote by $\bbp_{\sfi}^{\adm}(\bbs)$ the convex cone of all admissible self-financing portfolios $S^{\delta} \geq -a$ for some $a \in \bbr_+$. Moreover, we denote by $\bbp_{\sfi}^{\delta \geq 0}(\bbs)$ the convex cone of all self-financing portfolios $S^{\delta} \in \bbp_{\sfi}(\bbs)$ such that $\delta \geq 0$, which means that $\delta^i \geq 0$ for all $i \in I$. A self-financing portfolio $S^{\delta}$ is called an \emph{arbitrage portfolio} if $S_0^{\delta} = 0$ and $S_T^{\delta} \in L_+^0 \setminus \{ 0 \}$.

Recall that we assume that $S^i, S_-^i > 0$ for some $i \in I$. The proofs of the following two auxiliary results are straightforward, and therefore omitted.

\begin{lemma}\label{lemma-fin-market-1}
For each $\delta \in \Delta_{\sfi}(\bbs)$ we have $\delta^F \in \Delta_{\sfi}(\bbs^F)$, where $F := \supp(\delta)$, and $\bbs^F$ denotes the finite market
\begin{align*}
\bbs^F := \{ S^j : j \in F \cup \{ i \} \}.
\end{align*}
\end{lemma}

\begin{lemma}\label{lemma-fin-market-2}
Let $(\delta_n)_{n \in \bbn} \subset \Delta_{\sfi}(\bbs)$ be a sequence of self-financing strategies such that $F := \bigcup_{n \in \bbn} \supp(\delta_n)$ is finite. Then we also have $\delta_n^F \in \Delta_{\sfi}(\bbs^F)$ for each $n \in \bbn$, where $\bbs^F$ denotes the finite market
\begin{align*}
\bbs^F := \{ S^j : j \in F \cup \{ i \} \}.
\end{align*}
\end{lemma}

\begin{remark}\label{rem-fin-market}
In view of Lemmas \ref{lemma-fin-market-1} and \ref{lemma-fin-market-2}, from now on we may assume that we have a finite market $\bbs = \{ S^1,\ldots,S^d \}$ consisting of nonnegative semimartingales for some $d \in \bbn$ with $d \geq 2$, such that the semimartingales $S^1,\ldots,S^{d-1} \geq 0$ cannot revive from bankruptcy, and we have $S^d, S_-^d > 0$.
\end{remark}

\begin{lemma}\label{lemma-neg-port}
For every self-financing portfolio $S^{\delta} \in \bbp_{\sfi}(\bbs)$ the property $S^{\delta} \in \bbp_{\sfi}^{\delta \geq 0}(\bbs)$ implies $S^{\delta} \in \bbp_{\sfi}^+(\bbs)$.
\end{lemma}

\begin{proof}
Since $S^i \geq 0$ for all $i=1,\ldots,d$, the property $\delta^i \geq 0$ for all $i=1,\ldots,d$ implies $S^{\delta} = \delta \cdot S = \sum_{i=1}^d \delta^i S^i \geq 0$.
\end{proof}

\begin{definition}
Let $\bbx$ be a family of semimartingales, and let $Z$ be a semimartingale such that $Z,Z_- > 0$. We call $Z$ an \emph{equivalent supermartingale deflator (ESMD)} for $\bbx$ if $X Z$ is a supermartingale for all $X \in \bbx$.
\end{definition}

The following Propositions \ref{prop-ESMD-NA}--\ref{prop-ESMD-NA-4} show the statements of Theorems \ref{thm-main}--\ref{thm-main-3c}, provided that an ESMD $Z$ for $\bbp_{\sfi}^{\delta \geq 0}(\bbs) \cap \bbp_{\sfi}^{+}(\bbs)$ exists.

\begin{proposition}\label{prop-ESMD-NA}
Suppose that an ESMD $Z$ for $\bbp_{\sfi}^{\delta \geq 0}(\bbs) \cap \bbp_{\sfi}^{+}(\bbs)$ exists. Then for every arbitrage portfolio $S^{\delta} \in \bbp_{\sfi}(\bbs)$ we have $S^{\delta} \notin \bbp_{\sfi}^{\delta \geq 0}(\bbs)$.
\end{proposition}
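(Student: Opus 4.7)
The plan is to argue by contradiction, with the existence of the ESMD $Z$ doing all the heavy lifting. First I would suppose, aiming at a contradiction, that $S^{\delta}$ does belong to $\bbp_{\sfi}^{\delta \geq 0}(\bbs)$. Combined with the standing hypothesis $S^{\delta} \in \bbp_{\sfi}^{+}(\bbs)$, this places the portfolio in the intersection $\bbp_{\sfi}^{\delta \geq 0}(\bbs) \cap \bbp_{\sfi}^{+}(\bbs)$ to which the ESMD hypothesis applies. Consequently the product process $S^{\delta} Z$ is a supermartingale under $\bbp$.

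Next I would apply the supermartingale inequality between times $0$ and $T$ to obtain $\bbe[S_T^{\delta} Z_T] \leq \bbe[S_0^{\delta} Z_0]$. Since $\calf_0$ is $\bbp$-trivial and $Z_0 > 0$ is $\calf_0$-measurable, $Z_0$ is a positive constant; because $S^{\delta}$ is an arbitrage portfolio, $S_0^{\delta} = 0$, and the right-hand side vanishes. Hence $\bbe[S_T^{\delta} Z_T] \leq 0$. On the other hand, $S_T^{\delta} \geq 0$ (from $S^{\delta} \in \bbp_{\sfi}^{+}(\bbs)$) and $Z_T > 0$ (from the definition of an ESMD) force the nonnegative random variable $S_T^{\delta} Z_T$ to be zero almost surely. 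Dividing by the strictly positive $Z_T$ yields $S_T^{\delta} = 0$ a.s., which contradicts $S_T^{\delta} \in L_+^0 \setminus \{0\}$ and closes the argument.

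I do not expect a genuine obstacle at this step: the nontrivial content has been deliberately absorbed into the hypothesis that an ESMD $Z$ exists. The only bookkeeping point to verify is that $S^{\delta}$ actually qualifies to be acted on by the ESMD, i.e., lies in the intersection $\bbp_{\sfi}^{\delta \geq 0}(\bbs) \cap \bbp_{\sfi}^{+}(\bbs)$; this is immediate once the contradiction hypothesis is in place, and the locally bounded assumption on $\delta$ is consistent with $\bbp_{\sfi}(\bbs)$'s built-in requirement that $S^{\delta}\geq 0$ imply local boundedness of $\delta$. The real difficulty of Theorems \ref{thm-main}--\ref{thm-main-3} therefore lies elsewhere, namely in actually producing such a deflator $Z$ in the present semimartingale framework — presumably handled in a later proposition via a change-of-num\'{e}raire reduction to the results of \cite{Kardaras-Platen}.
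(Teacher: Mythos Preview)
Your argument is correct and essentially identical to the paper's own proof: both place $S^{\delta}$ in the intersection $\bbp_{\sfi}^{\delta \geq 0}(\bbs) \cap \bbp_{\sfi}^{+}(\bbs)$, use the supermartingale property of $S^{\delta}Z$ to obtain $\bbe[S_T^{\delta}Z_T]\leq S_0^{\delta}Z_0=0$, and conclude $S_T^{\delta}=0$ from $Z_T>0$. The only cosmetic difference is that the paper phrases this directly (any such portfolio has $S_T^{\delta}=0$, hence cannot be an arbitrage) rather than by contradiction, and your closing remarks about the real work being the construction of $Z$ via \cite{Kardaras-Platen} are exactly right.
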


\begin{proof}
Let $S^{\delta} \in \bbp_{\sfi}^{\delta \geq 0}(\bbs) \cap \bbp_{\sfi}^{+}(\bbs)$ be a self-financing portfolio such that $S_0^{\delta} = 0$ and $\xi := S_T^{\delta} \in L_+^0$. Since $Z$ is an ESMD for $\bbp_{\sfi}^{\delta \geq 0}(\bbs) \cap \bbp_{\sfi}^{+}(\bbs)$, the process $S^{\delta} Z$ is a nonnegative supermartingale. By Doob's optional stopping theorem we obtain
\begin{align*}
\bbe[\xi Z_T] = \bbe[S_T^{\delta} Z_T] \leq \bbe[S_0^{\delta} Z_0] = 0,
\end{align*}
and hence, we deduce that $\bbe[\xi Z_T] = 0$. Since $\xi \geq 0$ and $\bbp(Z_T > 0) = 1$, this shows $\xi = 0$. Consequently, the portfolio $S^{\delta}$ cannot be an arbitrage portfolio. Together with Lemma \ref{lemma-neg-port}, this completes the proof.
\end{proof}

\begin{proposition}\label{prop-ESMD-NA-2}
Suppose that an ESMD $Z$ for $\bbp_{\sfi}^{\delta \geq 0}(\bbs) \cap \bbp_{\sfi}^{+}(\bbs)$ exists. Let
\begin{align*}
(S^{\delta_n})_{n \in \bbn} \subset \bbp_{\sfi}(\bbs) 
\end{align*}
be an arbitrage of the first kind. Then there exists an index $n_0 \in \bbn$ such that $S^{\delta_n} \notin \bbp_{\sfi}^{\delta \geq 0}(\bbs)$ for all $n \geq n_0$.
\end{proposition}

\begin{proof}
We have $S_0^{\delta_n} \downarrow 0$ and there exists a random variable $\xi \in L_+^0 \setminus \{ 0 \}$ such that $S_T^{\delta_n} \geq \xi$ for all $n \in \bbn$. Suppose, contrary to the assertion above, there is a subsequence $(S^{\delta_{n_k}})_{k \in \bbn}$ such that $S^{\delta_{n_k}} \in \bbp_{\sfi}^{\delta \geq 0}(\bbs)$ for all $k \in \bbn$. Let $k \in \bbn$ be arbitrary. By Lemma \ref{lemma-neg-port} we have $S^{\delta_{n_k}} \in \bbp_{\sfi}^+(\bbs)$. Since $Z$ is an ESMD for $\bbp_{\sfi}^{\delta \geq 0}(\bbs) \cap \bbp_{\sfi}^{+}(\bbs)$, the process $S^{\delta_{n_k}} Z$ is a nonnegative supermartingale, and hence by Doob's optional stopping theorem we obtain
\begin{align*}
\bbe[\xi Z_T] \leq \bbe \big[ S_T^{\delta_{n_k}} Z_T \big] \leq \bbe \big[ S_0^{\delta_{n_k}} Z_0 \big] = S_0^{\delta_{n_k}} Z_0.
\end{align*}
Since $S_0^{\delta_{n_k}} \downarrow 0$, we deduce that $\bbe[\xi Z_T] = 0$. Therefore, and since $\xi \geq 0$ and $\bbp(Z_T > 0) = 1$, we obtain the contradiction $\xi = 0$.
\end{proof}

\begin{proposition}\label{prop-ESMD-NA-2b}
Suppose that an ESMD $Z$ for $\bbp_{\sfi}^{\delta \geq 0}(\bbs) \cap \bbp_{\sfi}^{+}(\bbs)$ exists. Let
\begin{align*}
(S^{\delta_n})_{n \in \bbn} \subset \bbp_{\sfi}(\bbs) 
\end{align*}
be an asymptotic arbitrage of the first kind. Then there is a subsequence $(S^{\delta_{n_k}})_{k \in \bbn}$ such that $S^{\delta_{n_k}} \notin \bbp_{\sfi}^{\delta \geq 0}(\bbs)$ for all $k \in \bbn$.
\end{proposition}

\begin{proof}
We have $S_0^{\delta_n} \downarrow 0$ and there exists a sequence of random variables $(\xi_n)_{n \in \bbn} \subset L_+^0$ such that $S_T^{\delta_n} \geq \xi_n$ for all $n \in \bbn$ and
\begin{align}\label{limsup-positive}
\limsup_{n \to \infty} \bbp(\xi_n \geq 1) > 0. 
\end{align}
Suppose, contrary to the assertion above, there is an index $n_0 \in \bbn$ such that $S^{\delta_n} \in \bbp_{\sfi}^{\delta \geq 0}(\bbs)$ for all $n \geq n_0$. According to Lemma \ref{lemma-neg-port} we have $S^{\delta_n} \in \bbp_{\sfi}^+(\bbs)$ for all $n \geq n_0$. By (\ref{limsup-positive}) there are $\epsilon \in (0,1]$ and a subsequence $(n_k)_{k \in \bbn}$ with $n_1 \geq n_0$ such that
\begin{align}\label{prob-xi-delta}
\bbp(\xi_{n_k} \geq 1) \geq \epsilon \quad \text{for all $k \in \bbn$.}
\end{align}
By the von Weizs\"{a}cker theorem (see \cite{Weizsaecker}) there exist another subsequence $(\xi_{n_{k_l}})_{l \in \bbn}$ and a nonnegative random variable $\xi : \Omega \to [0,\infty]$ such that $(\xi_{n_{k_l}})_{l \in \bbn}$ is $\bbp$-almost surely Ces\`{a}ro convergent to $\xi$. In other words, we have $\bar{\xi}_{n_{k_l}} \overset{\text{a.s.}}{\to} \xi$, where
\begin{align*}
\bar{\xi}_{n_{k_l}} := \frac{1}{l} \sum_{p=1}^l \xi_{n_{k_p}} \quad \text{for each $l \in \bbn$.}
\end{align*}
Noting (\ref{prob-xi-delta}), according to \cite[Lemma 9.8.6]{DS-book} (applied with $\eta = 1/2$) we have
\begin{align*}
\bbp \bigg( \sum_{p=1}^l \xi_{n_{k_p}} \geq \frac{\epsilon l}{2} \bigg) \geq \frac{\epsilon / 2}{1 - \epsilon/2} = \frac{\epsilon}{2 - \epsilon} \quad \text{for each $l \in \bbn$,}
\end{align*}
which implies
\begin{align*}
\bbp \bigg( \bar{\xi}_{n_{k_l}} \geq \frac{\epsilon}{2} \bigg) \geq \frac{\epsilon}{2 - \epsilon} \quad \text{for each $l \in \bbn$.}
\end{align*}
Therefore, and since $\bar{\xi}_{n_{k_l}} \overset{\bbp}{\to} \xi$, we have $\xi \neq 0$. Now, let $l \in \bbn$ be arbitrary. Since $Z$ is an ESMD for $\bbp_{\sfi}^{\delta \geq 0}(\bbs) \cap \bbp_{\sfi}^{+}(\bbs)$, the processes $S^{\delta_{n_{k_p}}} Z$, $p=1,\ldots,l$ are nonnegative supermartingales, and hence by Doob's optional stopping theorem we obtain
\begin{align*}
\bbe \big[ \bar{\xi}_{n_{k_l}} Z_T \big] = \frac{1}{l} \sum_{p=1}^l \bbe \big[ \xi_{n_{k_p}} Z_T \big] \leq \frac{1}{l} \sum_{p=1}^l \bbe \Big[ S_T^{\delta_{n_{k_p}}} Z_T \Big] \leq \frac{1}{l} \sum_{k=1}^l \bbe \Big[ S_0^{\delta_{n_{k_p}}} Z_0 \Big] = Z_0 \sigma_l,
\end{align*}
where $(\sigma_l)_{l \in \bbn} \subset (0,\infty)$ denotes the sequence
\begin{align*}
\sigma_l := \frac{1}{l} \sum_{p=1}^l S_0^{\delta_{n_{k_p}}}, \quad l \in \bbn.
\end{align*}
Since $S_0^{\delta_{n_{k_l}}} \to 0$, we have $\sigma_l \to 0$. Therefore, by Fatou's lemma we deduce that
\begin{align*}
\bbe[\xi Z_T] = \bbe \Big[ \lim_{l \to \infty} \bar{\xi}_{n_{k_l}} Z_T \Big] \leq \liminf_{l \to \infty} \bbe \big[ \bar{\xi}_{n_{k_l}} Z_T \big] \leq 0,
\end{align*}
and hence $\bbe[\xi Z_T] = 0$. Since $\xi \geq 0$ and $\bbp(Z_T > 0) = 1$, we obtain the contradiction $\xi = 0$.
\end{proof}

\begin{proposition}\label{prop-ESMD-NA-2b-v2}
Suppose that an ESMD $Z$ for $\bbp_{\sfi}^{\delta \geq 0}(\bbs) \cap \bbp_{\sfi}^{+}(\bbs)$ exists. Let
\begin{align*}
(S^{\delta_n})_{n \in \bbn} \subset \bbp_{\sfi}(\bbs) 
\end{align*}
be a limit arbitrage of the first kind. Then there is an index $n_0 \in \bbn$ such that $S^{\delta_n} \notin \bbp_{\sfi}^{\delta \geq 0}(\bbs)$ for all $n \geq n_0$.
\end{proposition}

\begin{proof}
We have $S_0^{\delta_n} \downarrow 0$ and there exists a sequence of random variables $(\xi_n)_{n \in \bbn} \subset L_+^0$ such that $S_T^{\delta_n} \geq \xi_n$ for all $n \in \bbn$ and
\begin{align}\label{lim-positive}
\lim_{n \to \infty} \bbp(\xi_n \geq 1) > 0. 
\end{align}
Suppose, contrary to the assertion above, there is a subsequence $(S^{\delta_{n_k}})_{k \in \bbn}$ such that $S^{\delta_{n_k}} \in \bbp_{\sfi}^{\delta \geq 0}(\bbs)$ for all $k \in \bbn$. According to Lemma \ref{lemma-neg-port} we have $S^{\delta_{n_k}} \in \bbp_{\sfi}^+(\bbs)$ for all $k \in \bbn$. By (\ref{lim-positive}) there are $\epsilon \in (0,1]$ and an index $n_0 \in \bbn$ such that
\begin{align}\label{prob-xi-delta-2}
\bbp(\xi_n \geq 1) \geq \epsilon \quad \text{for all $n \geq n_0$.}
\end{align}
Without loss of generality, we may assume that $n_0 \leq n_1$. By the von Weizs\"{a}cker theorem (see \cite{Weizsaecker}) there exist another subsequence $(\xi_{n_{k_l}})_{l \in \bbn}$ and a nonnegative random variable $\xi : \Omega \to [0,\infty]$ such that $(\xi_{n_{k_l}})_{l \in \bbn}$ is $\bbp$-almost surely Ces\`{a}ro convergent to $\xi$. Now, as in the proof of Proposition \ref{prop-ESMD-NA-2b} we derive the contradiction that simultaneously $\xi \neq 0$ and $\xi = 0$.
\end{proof}

\begin{proposition}\label{prop-ESMD-NA-3}
Suppose that an ESMD $Z$ for $\bbp_{\sfi}^{\delta \geq 0}(\bbs) \cap \bbp_{\sfi}^{+}(\bbs)$ exists. Let
\begin{align*}
(S^{\delta_n})_{n \in \bbn} \subset \bbp_{\sfi}(\bbs)
\end{align*}
be a free lunch with vanishing risk. Then there exists an index $n_0 \in \bbn$ such that $S^{\delta_n} \notin \bbp_{\sfi}^{\delta \geq 0}(\bbs)$ for all $n \geq n_0$.
\end{proposition}

\begin{proof}
There exist a random variable $\xi \in L_+^{\infty} \setminus \{ 0 \}$ and a sequence $(\xi_n)_{n \in \bbn} \subset L^{\infty}$ such that $\| \xi_n - \xi \|_{L^{\infty}} \to 0$ and $S_T^{\delta_n} \geq \xi_n$ for each $n \in \bbn$. Since $\xi \geq 0$ and the mapping $x \mapsto x^+$ is Lipschitz continuous, we also have $\| \xi_n^+ - \xi \|_{L^{\infty}} \to 0$, and hence $\| \xi_n^+ \|_{L^{\infty}} \to \| \xi \|_{L^{\infty}}$. Moreover, since $\| \xi \|_{L^{\infty}} > 0$, there exists an index $n_0 \in \bbn$ such that $\| \xi_n^+ \|_{L^{\infty}} > 0$ for each $n \geq n_0$. Now, let $n \geq n_0$ be arbitrary. We claim that $S^{\delta_n} \notin \bbp_{\sfi}^{\delta \geq 0}(\bbs)$. Indeed, by Lemma \ref{lemma-neg-port} we may assume that $S^{\delta_n} \in \bbp_{\sfi}^+(\bbs)$. Thus, since $S_T^{\delta_n} \geq 0$, we have $S_T^{\delta_n} \geq \xi_n^+$. Furthermore, since $S_0^{\delta_n} = 0$ and $\xi_n^+ \in L_+^{\infty} \setminus \{ 0 \}$, the portfolio $S^{\delta_n}$ is an arbitrage portfolio. Therefore, by Proposition \ref{prop-ESMD-NA} we obtain $S^{\delta_n} \notin \bbp_{\sfi}^{\delta \geq 0}(\bbs)$.
\end{proof}

\begin{proposition}\label{prop-ESMD-NA-4}
Suppose that an ESMD $Z$ for $\bbp_{\sfi}^{\delta \geq 0}(\bbs) \cap \bbp_{\sfi}^{+}(\bbs)$ with $Z_T \in L^1$ exists. Let
\begin{align*}
(S^{\delta_j})_{j \in J} \subset \bbp_{\sfi}(\bbs)
\end{align*}
be a free lunch. Then there exists an index $j_0 \in J$ such that $S^{\delta_j} \notin \bbp_{\sfi}^{\delta \geq 0}(\bbs)$ for all $j \geq j_0$.
\end{proposition}

\begin{proof}
There exist a random variable $\xi \in L_+^{\infty} \setminus \{ 0 \}$ and a net $(\xi_j)_{j \in J} \subset L^{\infty}$ such that $\xi_j \overset{w^*}{\to} \xi$ and $S_T^{\delta_j} \geq \xi_j$ for each $j \in J$. Therefore, we have $\bbe[\xi_j \eta] \to \bbe[\xi \eta]$ for all $\eta \in L^1$. In particular, since $Z_T \in L^1$, we obtain
\begin{align}\label{xi-Z-conv}
\bbe[\xi_j Z_T] \to \bbe[\xi Z_T].
\end{align}
Suppose, contrary to the assertion above, that for every $j_0 \in J$ there exists $j \geq j_0$ such that $S^{\delta_j} \in \bbp_{\sfi}^{\delta \geq 0}(\bbs)$. According to Lemma \ref{lemma-neg-port} we have $S^{\delta_j} \in \bbp_{\sfi}^+(\bbs)$. Moreover, we have
\begin{align}\label{E-xi-j-zero}
\bbe[\xi_j Z_T] = 0.
\end{align}
Indeed, since $Z$ is an ESMD for $\bbp_{\sfi}^{\delta \geq 0}(\bbs) \cap \bbp_{\sfi}^{+}(\bbs)$, the process $S^{\delta_j} Z$ is a nonnegative supermartingale, and hence, recalling that $S_0^{\delta_j} = 0$, by Doob's optional stopping theorem we obtain
\begin{align*}
\bbe[\xi_j Z_T] \leq \bbe[S_T^{\delta_j} Z_T] \leq \bbe [S_0^{\delta_j} Z_0] = S_0^{\delta_j} Z_0 = 0.
\end{align*}
Now, we will show that $\bbe[\xi Z_T] = 0$. Indeed, let $\epsilon > 0$ be arbitrary. Due to (\ref{xi-Z-conv}), there exists $j_0 \in J$ such that
\begin{align*}
| \bbe[\xi_j Z_T] - \bbe[\xi Z_T] | \leq \epsilon \quad \text{for all $j \geq j_0$.}
\end{align*}
Thus, using (\ref{E-xi-j-zero}) it follows that
\begin{align*}
\bbe[\xi Z_T] \leq \epsilon.
\end{align*}
Since $\epsilon > 0$ was arbitrary, we deduce that $\bbe[\xi Z_T] = 0$. Now, noting that $\xi \geq 0$ and $\bbp(Z_T > 0) = 1$, we obtain the contradiction $\xi = 0$.
\end{proof}

Now, our goal is to prove that an ESMD $Z$ for $\bbp_{\sfi}^{\delta \geq 0}(\bbs) \cap \bbp_{\sfi}^{+}(\bbs)$ exists. Recall that due to Lemmas \ref{lemma-fin-market-1} and \ref{lemma-fin-market-2} we may assume that we are in the situation of Remark \ref{rem-fin-market}. We define $X^i := S^i / S^d$ for $i=1,\ldots,d-1$ and the $\bbr^d$-valued semimartingale $\bar{X} := (X,1)$, where $X := (X^1,\ldots,X^{d-1})$. Furthermore, we define $\bbx := \{ X^1,\ldots,X^{d-1} \}$ and the discounted market $\bar{\bbx} := \{ \bar{X}^1,\ldots,\bar{X}^d \}$. The following are two well-known results about the change of num\'{e}raire technique.

\begin{lemma}\cite[Prop. 5.2]{Takaoka-Schweizer}\label{lemma-change-num-1}
For an $\bbr^d$-valued predictable process $\delta$ the following statements are equivalent:
\begin{enumerate}
\item[(i)] We have $\delta \in \Delta_{\sfi}(\bbs)$.

\item[(ii)] We have $\delta \in \Delta_{\sfi}(\bar{\bbx})$.
\end{enumerate}
If the previous conditions are fulfilled, then we have
\begin{align*}
S^{\delta} = S^d \bar{X}^{\delta}.
\end{align*}
\end{lemma}

For a pair $(x,\theta) \in \bbr \times \Delta(\bbx)$ the corresponding wealth process is defined as
\begin{align*}
X^{x,\theta} := x + \theta \bdot X.
\end{align*}

\begin{lemma}\cite[Lemma 5.1]{Takaoka-Schweizer}\label{lemma-change-num-2}
There is a bijection between $\bbr \times \Delta(\bbx)$ and $\Delta_{\sfi}(\bar{\bbx})$, which is defined as follows:
\begin{enumerate}
\item For $\delta \in \Delta_{\sfi}(\bar{\bbx})$ we assign
\begin{align*}
\delta \mapsto (x,\theta) := (X_0^{\delta},\delta^1,\ldots,\delta^{d-1}) \in \bbr \times \Delta(\bbx).
\end{align*}
\item For $(x,\theta) \in \bbr \times \Delta(\bbx)$ we assign
\begin{align*}
(x,\theta) \mapsto \delta = (\theta, X_-^{x,\theta} - \theta \cdot X_- ) \in \Delta_{\sfi}(\bar{\bbx}).
\end{align*}
\end{enumerate}
Furthermore, for all $(x,\theta) \in \bbr \times \Delta(\bbx)$ and the corresponding self-financing strategy $\delta \in \Delta_{\sfi}(\bar{\bbx})$ we have
\begin{align*}
\bar{X}^{\delta} = X^{x,\theta}.
\end{align*}
\end{lemma}

\begin{proposition}\label{prop-ESMD-exists}
Let $\bbs = \{ S^1,\ldots,S^d \}$ be a financial market consisting of nonnegative semimartingales for some $d \in \bbn$ with $d \geq 2$, such that the semimartingales $S^1,\ldots,S^{d-1} \geq 0$ cannot revive from bankruptcy, and we have $S^d, S_-^d > 0$. Then the following statements are true:
\begin{enumerate}
\item There exists an ESMD $Z$ for $\bbp_{\sfi}^{\delta \geq 0}(\bbs) \cap \bbp_{\sfi}^{+}(\bbs)$.

\item If $(S_T^d)^{-1} \in L^{\infty}$, then the ESMD $Z$ can be chosen such that $Z_T \in L^1$.
\end{enumerate}
\end{proposition}

\begin{proof}
According to \cite[Thm. 1.3]{Kardaras-Platen} there exists an ESMD $Y$ for $\bar{\bbx}$ such that $Y_0 = 1$. We set
\begin{align}\label{def-Z}
Z := Y / S^d. 
\end{align}
Since $1 \in \bar{\bbx}$, the process $Y$ is a supermartingale. Thus, if $(S_T^d)^{-1} \in L^{\infty}$, then we have $Z_T \in L^1$. Now, let $S^{\delta} \in \bbp_{\sfi}^{\delta \geq 0}(\bbs) \cap \bbp_{\sfi}^{+}(\bbs)$ be arbitrary. By Lemma \ref{lemma-change-num-1} we have $\bar{X}^{\delta} \in \bbp_{\sfi}^{\delta \geq 0}(\bar{\bbx}) \cap \bbp_{\sfi}^{+}(\bar{\bbx})$ and the identity
\begin{align}\label{change-num-2}
S^{\delta} = S^d \bar{X}^{\delta}.
\end{align}
We define the increasing sequence $(D_n)_{n \in \bbn}$ of predictable sets by $D_n := \{ \| \delta \| \leq n \}$ for all $n \in \bbn$, and the sequence of predictable, bounded, nonnegative processes $(\delta_n)_{n \in \bbn}$ as $\delta_n := \delta \bbI_{D_n}$ for all $n \in \bbn$. Let $n \in \bbn$ be arbitrary. By Lemma \ref{lemma-change-num-2} we have
\begin{align*}
\bar{X}^{\delta_n} = X^{x_n,\theta_n}.
\end{align*}
where $(x_n,\theta_n) \in \bbr \times \Delta(\bbx)$ is given by
\begin{align*}
(x_n,\theta_n) := (X_0^{\delta_n},\delta_n^1,\ldots,\delta_n^{d-1}),
\end{align*}
and we have
\begin{align*}
\delta_n = (\theta_n, X_-^{x_n,\theta_n} - \theta_n \cdot X_- ).
\end{align*}
Using integration by parts, we obtain 
\begin{align*}
Y \bar{X}^{\delta_n} = Y X^{x_n,\theta_n} &= x_n + \theta_n \bdot (Y X) + (X_-^{x_n,\theta_n} - \theta_n \cdot X_-) \bdot Y
\\ &= x_n + \delta_n \bdot (Y \bar{X}),
\end{align*}
where $Y \bar{X}$ denotes the $\bbr^d$-valued supermartingale with components $Y \bar{X}^i$, $i=1,\ldots,d$; cf. \cite[p. 2685]{Kardaras-Platen}. By \cite[Prop. 5.2]{Takaoka-Schweizer} we have $\delta \in \Delta_{\sfi}(Y \bar{\bbx})$, where $Y \bar{\bbx} := \{ Y \bar{X}^1, \ldots, Y \bar{X}^d \}$. Hence, using \cite[Thm. III.6.19.a]{Jacod-Shiryaev}, letting $n \to \infty$ we obtain
\begin{align*}
Y \bar{X}^{\delta} = x + \delta \bdot (Y \bar{X}).
\end{align*}
Now, for each $n \in \bbn$ we obtain
\begin{align*}
\bbI_{D_n} \bdot (Y \bar{X}^{\delta}) = \bbI_{D_n} \bdot \big( \delta \bdot (Y \bar{X}) \big) = (\delta \bbI_{D_n}) \bdot (Y \bar{X}) = \delta_n \bdot (Y \bar{X}),
\end{align*}
where we have used the associativity of the stochastic integral; see \cite[Thm. 4.6]{Shiryaev-Cherny}. It follows that $\bbI_{D_n} \bdot (Y \bar{X}^{\delta})$ is a supermartingale for each $n \in \bbn$. Hence, we deduce that $Y \bar{X}^{\delta}$ is a nonnegative $\sigma$-supermartingale in the sense of \cite{Kallsen}, and by \cite[Prop. 3.1]{Kallsen} it follows that $Y \bar{X}^{\delta}$ is a supermartingale. Taking into account (\ref{def-Z}) and (\ref{change-num-2}), this shows that $Z S^{\delta}$ is a supermartingale. Consequently, the process $Z$ is an ESMD for $\bbp_{\sfi}^{\delta \geq 0}(\bbs) \cap \bbp_{\sfi}^{+}(\bbs)$.
\end{proof}

Now, we are ready to provide the proofs of the results stated in Section \ref{sec-intro}. Recall that due to Lemmas \ref{lemma-fin-market-1} and \ref{lemma-fin-market-2} we may assume that we are in the situation of Remark \ref{rem-fin-market}. The proofs of Theorems \ref{thm-main}--\ref{thm-main-3} and \ref{thm-main-3c} are a consequence of combining Propositions \ref{prop-ESMD-NA}--\ref{prop-ESMD-NA-4} with Proposition \ref{prop-ESMD-exists}, and the proof of Theorem \ref{thm-main-3b} is an immediate consequence of the more general Theorem \ref{thm-main-3c}.

\section{Financial market models in discrete time}\label{sec-discrete}

In this section we briefly demonstrate how the proof of our main result can be simplified for discrete time models with strictly positive primary security accounts. We assume that a discrete filtration $(\calf_t)_{t=0,\ldots,T}$ for some $T \in \bbn$ with $\calf_0 = \{ \Omega,\emptyset \}$ is given, and we consider a market $\bbs = \{ S^i : i \in I \}$ consisting of nonnegative, adapted processes $S^i \geq 0$ for all $i \in I$. As shown in \cite[page 14]{Jacod-Shiryaev}, this can be regarded as a particular case of the continuous time setting, which we have considered so far. In this particular setting, the space $\Delta_{\sfi}(\bbs)$ of all self-financing strategies consists of all predictable processes $\delta = (\delta_t)_{t=1,\ldots,T}$ with finite support (\ref{support}) such that
\begin{align}\label{discrete-self-fin}
\delta_t \cdot S_t = \delta_{t+1} \cdot S_t \quad \text{for all $t=1,\ldots,T-1$,}
\end{align}
and the corresponding self-financing portfolio $S^{\delta}$ is given by
\begin{align}\label{discrete-port-1}
S_0^{\delta} &= \delta_1 \cdot S_0,
\\ \label{discrete-port-2} S_t^{\delta} &= \delta_t \cdot S_t, \quad t=1,\ldots,T.
\end{align}
If the primary security accounts are strictly positive, then we can provide an elementary proof of Theorem \ref{thm-main} for such discrete time models by using the following result.

\begin{proposition}\label{prop-discrete}
If $S^i > 0$ for all $i \in I$, then for every self-financing portfolio $S^{\delta} \in \bbp_{\sfi}^{\delta \geq 0}(\bbs)$ with $S_0^{\delta} = 0$ we have $\delta = 0$, and hence $S^{\delta} = 0$.
\end{proposition}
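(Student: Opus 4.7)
The plan is to prove by induction on $t \in \{1,\ldots,T\}$ that $\delta_t = 0$, from which $S_t^{\delta} = \delta_t \cdot S_t = 0$ for each $t = 0,\ldots,T$ follows immediately via (\ref{discrete-port-1}) and (\ref{discrete-port-2}). The crucial structural observation, which I will use at every step, is the following elementary fact: if $a \in \bbr_+^d$ and $b \in \bbr^d$ with $b^i > 0$ for all $i$, then $a \cdot b = 0$ forces $a = 0$. This is just the positivity of the primary security accounts combined with the no-short-selling assumption $\delta \geq 0$.

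For the base case, the hypothesis $S_0^{\delta} = 0$ together with (\ref{discrete-port-1}) yields $\delta_1 \cdot S_0 = 0$. Since $S_0 > 0$ componentwise (actually $\calf_0$-measurable and thus deterministic, so just a strictly positive vector in $\bbr^d$) and $\delta_1 \geq 0$, the elementary fact above forces $\delta_1 = 0$.

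For the inductive step, suppose $\delta_t = 0$ for some $t \in \{1,\ldots,T-1\}$. Then the self-financing equation (\ref{discrete-self-fin}) reads
\begin{align*}
\delta_{t+1} \cdot S_t = \delta_t \cdot S_t = 0.
\end{align*}
Since $S_t > 0$ componentwise pointwise on $\Omega$ and $\delta_{t+1} \geq 0$, applying the elementary fact pathwise (on the event where we work, which is all of $\Omega$ up to a null set) yields $\delta_{t+1} = 0$, completing the induction.

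There is really no serious obstacle here; the only mildly subtle point is that the argument must be performed pathwise (the positivity of $S_t^i$ and nonnegativity of $\delta_t^i$ hold $\bbp$-a.s., and the conclusion is drawn $\omega$ by $\omega$), but this is immediate because each identity above is an identity of random variables. This elementary argument is in stark contrast to the continuous-time case, where the self-financing condition only holds in the stochastic integral sense and one therefore has to appeal to the existence of an ESMD and to Doob's optional stopping theorem, as done in Section \ref{sec-proof}.
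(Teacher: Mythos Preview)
Your proof is correct and follows essentially the same route as the paper: both argue by induction on $t$, first using $\delta_1 \cdot S_0 = 0$ together with $\delta_1 \geq 0$ and $S_0 > 0$ to conclude $\delta_1 = 0$, and then propagating this via the self-financing relation (\ref{discrete-self-fin}) and strict positivity of $S_t$. Your write-up is simply more explicit about the pathwise reasoning and the elementary fact underpinning each step.
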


\begin{proof}
Inductively we will show that $\delta_t = 0$ for all $t=1,\ldots,T$. For $t=1$ note that by (\ref{discrete-port-1}) we have $\delta_1 \cdot S_0 = 0$. Since $\delta_1^i \geq 0$ and $S_0^i > 0$ for all $i \in I$, we obtain $\delta_1 = 0$. We proceed with the induction step $t \to t+1$ and suppose that $\delta_t = 0$. Then by (\ref{discrete-self-fin}) we have $\delta_{t+1} \cdot S_t = 0$. Since $\delta_{t+1}^i \geq 0$ and $S_t^i > 0$ for all $i \in I$, we obtain $\delta_{t+1} = 0$, completing the proof.
\end{proof}

\section{Examples}\label{sec-examples}

In this section we present examples, where arbitrage portfolios can be explicitly computed, and verify that, in accordance with Theorem \ref{thm-main}, these portfolios indeed require short selling.

\begin{example}
Consider the market $\bbs = \{ S^1, S^2 \}$ consisting of two primary security accounts such that $S_0^1 = S_0^2$ and $S_t^1 > S_t^2$ for all $t > 0$. For example, the assets could be $S_t^1 = e^t$ for $t \in \bbr_+$ and $S^2 = 1$. Then $\delta = (1,-1)$ is a self-financing strategy, and we obtain the nonnegative arbitrage portfolio $S^{\delta} = S^1 - S^2$. In accordance with Theorem \ref{thm-main}, we see that this arbitrage portfolio requires short selling.
\end{example}

In the next example we consider a simple one-period model in discrete time, which is a particular case of the setting from Section \ref{sec-discrete}.

\begin{example}
This example can be found in \cite[Exercise 1.1.1]{FS}. The probability space $(\Omega,\calf,\bbp)$ is defined as follows. Let $\Omega := \{ \omega_1,\omega_2,\omega_3 \}$ with pairwise different elements $\omega_1, \omega_2, \omega_3$, the $\sigma$-algebra is given by the power set $\calf := \mathfrak{P}(\Omega)$, and concerning the probability measure $\bbp$ we assume that $\bbp(\{ \omega_i \}) > 0$ for all $i=1,2,3$. The filtration $(\calf_t)_{t=0,1}$ is given by $\calf_0 := \{ \Omega,\emptyset \}$ and $\calf_1 := \calf$. In this example we have three primary security accounts. The values at time $t=0$ are given by the vector $\pi \in \bbr^3$ defined as
\begin{align*}
\pi := \left(
\begin{array}{c}
1
\\ 2
\\ 7
\end{array}
\right),
\end{align*}
and the values at the terminal time $T=1$ are given by the random vector $S : \Omega \to \bbr^3$ defined as
\begin{align*}
S(\omega_1) := \left(
\begin{array}{c}
1
\\ 3
\\ 9
\end{array}
\right), \quad S(\omega_2) := \left(
\begin{array}{c}
1
\\ 1
\\ 5
\end{array}
\right) \quad \text{and} \quad S(\omega_3) := \left(
\begin{array}{c}
1
\\ 5
\\ 10
\end{array}
\right).
\end{align*}
Hence, the market $\bbs = \{ S^1,S^2,S^3 \}$ consists of one savings account and two risky security accounts. Note that in this example every arbitrage portfolio $S^{\delta} \in \bbp_{\sfi}(\bbs)$ is nonnegative; that is, we have $S^{\delta} \in \bbp_{\sfi}^+(\bbs)$. It is easily verified that the market $\bbs$ admits arbitrage, and that all arbitrage portfolios are specified by the set
\begin{align*}
\left\{ \left(
\begin{array}{c}
3 \xi
\\ 2 \xi
\\ -\xi
\end{array}
\right) : \xi > 0 \right\}.
\end{align*}
Hence, in accordance with Theorem \ref{thm-main}, each such arbitrage portfolio requires short selling. More precisely, in order to exploit arbitrage, we have to go short in the second risky primary security account, whereas for the first risky primary security account and for the savings account we have long positions.
\end{example}

The above two examples we have considered financial markets, where NA is not satisfied; that is, at least one nonnegative arbitrage portfolio exists. Now, we are interested in markets satisfying NA, and still admitting arbitrage portfolios which can go negative, e.g. admissible arbitrage portfolios, and to confirm the validity of Theorem \ref{thm-main} also for such arbitrage portfolios.

For what follows, we consider a finite market $\bbs = \{ S^1,\ldots,S^d \}$ consisting of nonnegative semimartingales. Suppose the model satisfies NA; that is, a nonnegative arbitrage portfolio does not exist. We are interested in a criterion which ensures the existence of an arbitrage portfolio. As we will see below, such a criterion is the existence of a money market bubble. Important references about bubbles include, for example, \cite{HLW, Cox-Hobson, Jarrow-Madan, JPS, Guasoni} and the article \cite{Protter-bubbles} with an overview of the respective literature.

Before we proceed, let us clarify the relation between our upcoming results and existing results in the bubble literature. Usually, when introducing bubbles, one assumes that an equivalent risk-neutral probability measure $\bbq \approx \bbp$ exists, which is equivalent to no-arbitrage in the sense of NFLVR. Then a price process $S^i$ has a bubble if it is a strict supermartingale under the risk-neutral probability measure $\bbq$. For what follows, we will drop the typical assumption that an equivalent risk-neutral probability measure exists, which means NFLVR, and only assume that NUPBR holds true.

\begin{definition}
The market $\bbs$ has a \emph{bubble} if there are an index $i \in \{ 1,\ldots,d \}$ and a self-financing portfolio $S^{\vartheta} \in \bbp_{\sfi}^+(\bbs)$ such that $S_0^{\vartheta} < S_0^i$ and $S_T^{\vartheta} \geq S_T^i$. In this case, the bubble is $S^i$.
\end{definition}

\begin{remark}
Since we assume NUPBR rather than NFLVR, our definition of a bubble only relates to nonnegative, rather than admissible portfolios.
\end{remark}

\begin{lemma}\label{lemma-bubble-arb}
Suppose $S^i,S_-^i > 0$ for all $i=1,\ldots,d$, and that the market $\bbs$ has a bubble. Then there exists an arbitrage portfolio $S^{\delta}$.
\end{lemma}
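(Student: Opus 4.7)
The plan is to exploit the bubble by scaling the cheaper replicating portfolio and going short in the overpriced primary security account. Let $i \in \{1,\ldots,d\}$ and $\vartheta$ with $S^{\vartheta} \in \bbp_{\sfi}^+(\bbs)$ witness the bubble, and set the deterministic constants $b := S_0^{\vartheta}$ and $c := S_0^i$ (both deterministic since $\calf_0$ is trivial); these satisfy $0 \leq b < c$, while $S_T^{\vartheta} \geq S_T^i$ almost surely and $S^i > 0$.

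First I would dispatch the degenerate case $b = 0$: then $\vartheta$ itself already has $S_0^{\vartheta} = 0$ and $S_T^{\vartheta} \geq S_T^i > 0$ almost surely, so $S_T^{\vartheta} \in L_+^0 \setminus \{ 0 \}$ and $\vartheta$ is an arbitrage portfolio.

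For the main case $b > 0$, put $\alpha := c/b > 1$ and define the strategy
\begin{align*}
\delta := \alpha \vartheta - e_i,
\end{align*}
where $e_i \in \bbr^d$ is the $i$-th standard basis vector viewed as a constant predictable strategy. The constant strategy $e_i$ is self-financing because $S^{e_i} = S^i = S_0^i + e_i \bdot S$, and since self-financing strategies form a vector space, $\delta \in \Delta_{\sfi}(\bbs)$. A direct computation then gives $S_0^{\delta} = \alpha b - c = 0$, and using $S_T^{\vartheta} \geq S_T^i$ together with $\alpha > 1$ and $S_T^i > 0$ yields
\begin{align*}
S_T^{\delta} = \alpha S_T^{\vartheta} - S_T^i \geq (\alpha - 1) S_T^i > 0 \quad \text{a.s.,}
\end{align*}
so $S_T^{\delta} \in L_+^0 \setminus \{ 0 \}$ and $\delta$ is an arbitrage portfolio.

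The main obstacle is the bookkeeping needed to place the constructed portfolio inside $\bbp_{\sfi}(\bbs)$: the local boundedness clause in the definition of $\bbp_{\sfi}(\bbs)$ is triggered only when $S^{\delta} \geq 0$. In the main case the process $S^{\delta} = \alpha S^{\vartheta} - S^i$ can become negative in between, so the clause is vacuous and no local boundedness check is needed. In the degenerate case one invokes that the bubble portfolio is tradeable, i.e., $\vartheta$ is locally bounded; if this is not built into the bubble definition, it can be arranged by a standard localization argument stopping $\vartheta$ before it explodes while preserving $S_T^{\vartheta} \geq S_T^i$ on a set of positive probability.
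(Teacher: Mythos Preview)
Your argument is essentially identical to the paper's: split off the degenerate case $S_0^{\vartheta}=0$, and in the main case scale by $\alpha=S_0^i/S_0^{\vartheta}>1$ and subtract $e_i$ to obtain $\delta=\alpha\vartheta-e_i$ with $S_0^{\delta}=0$ and $S_T^{\delta}\geq(\alpha-1)S_T^i>0$. The paper cites an external lemma for the fact that $\alpha\vartheta-e_i$ is again self-financing; your direct observation that $\Delta_{\sfi}(\bbs)$ is a vector space and $e_i\in\Delta_{\sfi}(\bbs)$ is the same content.

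Your final paragraph is unnecessary and slightly off. The lemma only claims the existence of an arbitrage portfolio, and the paper's definition of an arbitrage portfolio requires merely a self-financing strategy with $S_0^{\delta}=0$ and $S_T^{\delta}\in L_+^0\setminus\{0\}$; membership in $\bbp_{\sfi}(\bbs)$ (i.e., the local boundedness clause for nonnegative portfolios) is not part of what must be verified here, and the paper's proof does not verify it either. In particular, your sentence ``can become negative in between, so the clause is vacuous'' is not a valid argument: the possibility of negativity does not guarantee negativity, so if $S^{\delta}$ happened to be nonnegative the clause would apply. Simply drop that paragraph.
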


\begin{proof}
There are an index $i \in \{ 1,\ldots,d \}$ and a self-financing portfolio $S^{\vartheta} \in \bbp_{\sfi}^+(\bbs)$ such that $S_0^{\vartheta} < S_0^i$ and $S_T^{\vartheta} \geq S_T^i$. If $S_0^{\vartheta} = 0$, then $S^{\delta}$ with $\delta := \vartheta$ is already an arbitrage portfolio. Now, we assume that $S_0^{\vartheta} > 0$. Setting
\begin{align*}
\alpha := \frac{S_0^i}{S_0^{\vartheta}} > 1,
\end{align*}
by \cite[Lemma 7.21]{Platen-Tappe-tvs} the process $\delta := \alpha \vartheta - e_i$ is also a self-financing strategy for the market $\bbs$, and we have
\begin{align}\label{arb-port-bubble}
S^{\delta} = \alpha S^{\vartheta} - S^i.
\end{align}
It follows that $S_0^{\delta} = 0$ and $S_T^{\delta} > 0$, showing that $S^{\delta}$ is an arbitrage portfolio.
\end{proof}

\begin{remark}
As we can see from (\ref{arb-port-bubble}), the arbitrage portfolio $S^{\delta}$ is generally not admissible; in particular, it can happen that it goes negative.
\end{remark}

For what follows, we recall that $\calm$ denotes the space of all uniformly integrable martingales, and that $\calm_{\loc}$ denotes the space of all local martingales. A semimartingale $Z$ with $Z,Z_- > 0$ is called an \emph{equivalent local martingale deflator} (ELMD) for $\bbs$ if $S^i Z \in \calm_{\loc}$ for each $i=1,\ldots,d$.

\begin{lemma}\label{lemma-bubble}
Suppose there exists an ELMD $Z$ for $\bbs$ such that $S^i Z \in \calm$ for each $i=1,\ldots,d$. Then the market $\bbs$ does not have a bubble.
\end{lemma}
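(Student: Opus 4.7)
The plan is to derive a contradiction. Assume there is an index $i \in \{1,\ldots,d\}$ and a nonnegative self-financing portfolio $S^{\vartheta} \in \bbp_{\sfi}^+(\bbs)$ with $S_0^{\vartheta} < S_0^i$ and $S_T^{\vartheta} \geq S_T^i$. The target is to establish the chain
\begin{align*}
Z_0 S_0^i \;=\; \bbe[Z_T S_T^i] \;\leq\; \bbe[Z_T S_T^{\vartheta}] \;\leq\; Z_0 S_0^{\vartheta},
\end{align*}
from which, since $\calf_0$ is trivial and so $Z_0$ is a strictly positive constant, we would conclude $S_0^i \leq S_0^{\vartheta}$, contradicting the strict inequality in the bubble definition. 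The leftmost equality is immediate from the hypothesis that $Z S^i$ is a true martingale, and the middle inequality follows from $Z_T > 0$ and $S_T^{\vartheta} \geq S_T^i$. So the only work is to justify the rightmost inequality, that is, to show that $Z S^{\vartheta}$ is a supermartingale.

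For that I would rerun the num\'eraire-change calculation used in the proof of Proposition~\ref{prop-ESMD-exists}, but in the local-martingale regime. Setting $Y := Z S^d$, the assumption that each $Z S^i$ is a (local) martingale makes $Y \bar X$ a vector local martingale, since $Y \bar X^i = Z S^i$ for $i=1,\ldots,d-1$ and $Y \bar X^d = Z S^d$. Since $S^{\vartheta} \in \bbp_{\sfi}^+(\bbs) \subset \bbp_{\sfi}(\bbs)$ and $S^{\vartheta} \geq 0$, the strategy $\vartheta$ is locally bounded. Invoking Lemmas~\ref{lemma-change-num-1}--\ref{lemma-change-num-2} and the same integration-by-parts identity as in Proposition~\ref{prop-ESMD-exists} then gives
\begin{align*}
Z S^{\vartheta} \;=\; Y \bar X^{\vartheta} \;=\; \bar X_0^{\vartheta} + \vartheta \bdot (Y \bar X),
\end{align*}
which is the stochastic integral of a locally bounded process against a vector local martingale, and hence itself a local martingale. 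Because $S^{\vartheta} \geq 0$ and $Z > 0$, the process $Z S^{\vartheta}$ is a nonnegative local martingale, therefore a supermartingale; in particular $\bbe[Z_T S_T^{\vartheta}] \leq Z_0 S_0^{\vartheta}$, which is exactly the rightmost inequality needed to close the contradiction.

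The main obstacle is nothing more than recognising that one does not need to reprove the whole Kardaras--Platen num\'eraire-change machinery from scratch: it is enough to observe that the only place in Proposition~\ref{prop-ESMD-exists} where nonnegativity of $\delta$ is used is to preserve the supermartingale property under stochastic integration against a supermartingale. Here we have a strictly stronger integrator (a local martingale), so the sign restriction on $\vartheta$ can be dropped, and nonnegativity is instead imported \emph{at the end} via $S^{\vartheta} \geq 0$ to pass from local martingale to supermartingale. Apart from this observation, the rest of the argument is a one-line application of the martingale/supermartingale inequalities.
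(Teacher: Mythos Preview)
Your argument is correct and follows the same line as the paper's: both establish the chain $Z_0 S_0^i = \bbe[Z_T S_T^i] \leq \bbe[Z_T S_T^{\vartheta}] \leq Z_0 S_0^{\vartheta}$, the only nontrivial step being that $Z S^{\vartheta}$ is a (nonnegative local martingale, hence) supermartingale. The paper obtains this by citing an external result (\cite[Prop.~4.12]{Platen-Tappe-FTAP}) stating that an ELMD for $\bbs$ is automatically an ELMD for $\bbp_{\sfi}^+(\bbs)$, whereas you reprove exactly this fact in a self-contained way by rerunning the num\'eraire-change and integration-by-parts computation from Proposition~\ref{prop-ESMD-exists} with $Y := Z S^d$; your observation that the sign constraint on the strategy can be dropped because the integrator is now a local martingale rather than a mere supermartingale is precisely the content of that cited proposition. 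One cosmetic slip: since $Y_0$ need not equal $1$ here, the constant in your displayed identity should be $Y_0\,\bar X_0^{\vartheta}$ rather than $\bar X_0^{\vartheta}$, but this does not affect the local-martingale conclusion.
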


\begin{proof}
According to \cite[Prop. C.4]{Platen-Tappe-FTAP} the process $Z$ is also an ELMD for $\bbp_{\sfi}^+(\bbs)$. Let $i \in \{ 1,\ldots,d \}$ and $S^{\vartheta} \in \bbp_{\sfi}^+(\bbs)$ be such that $S_T^{\vartheta} \geq S_T^i$. By \cite[Lemma I.1.44]{Jacod-Shiryaev} and Doob's optional stopping theorem for nonnegative supermartingales we obtain
\begin{align*}
(S_0^{\vartheta} - S_0^i)Z_0 &= \bbe[S_0^{\vartheta}Z_0] -\bbe[S_0^i Z_0]
\\ &\geq \bbe[S_T^{\vartheta} Z_T] - \bbe[S_T^i Z_T] = \bbe[(S_T^{\vartheta} - S_T^i) Z_T] \geq 0,
\end{align*}
and hence $S_0^{\vartheta} \geq S_0^i$.
\end{proof}

The following result contains sufficient conditions for the converse statement of the previous result.

\begin{lemma}\label{lemma-bubble-suff}
Suppose there exists an ELMD $Z$ for $\bbs$ such that for some $i \in \{ 1,\ldots,d \}$ we have $S^i Z \notin \calm$. Furthermore, we assume that $S_T^i Z_T \in L^1$ and that there exists a self-financing portfolio $S^{\vartheta} \in \bbp_{\sfi}^+(\bbs)$ such that
\begin{align}\label{port-for-bubble}
S_t^{\vartheta} = \frac{\bbe[S_T^i Z_T | \calf_t]}{Z_t}, \quad t \in [0,T].
\end{align}
Then the market $\bbs$ has the bubble $S^i$.
\end{lemma}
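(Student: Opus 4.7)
The plan is to use the given portfolio $S^{\vartheta}$ itself as the witness for the bubble, and to verify the two inequalities $S_0^{\vartheta}<S_0^i$ and $S_T^{\vartheta}\geq S_T^i$ appearing in the definition.

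The terminal comparison is immediate: at $t=T$ the conditional expectation becomes trivial, so
\begin{align*}
S_T^{\vartheta}=\frac{\bbe[S_T^i Z_T\mid\calf_T]}{Z_T}=\frac{S_T^i Z_T}{Z_T}=S_T^i,
\end{align*}
using that $Z_T>0$ by the definition of an ELMD. In particular $S_T^{\vartheta}\geq S_T^i$.

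The initial comparison is the place where the failure of the martingale property is exploited. Since $\calf_0$ is $\bbp$-trivial, $Z_0$ is a positive constant and
\begin{align*}
S_0^{\vartheta}=\frac{\bbe[S_T^i Z_T]}{Z_0}.
\end{align*}
Because $Z$ is an ELMD and $S^i\geq0$, the process $S^iZ$ is a nonnegative local martingale on $[0,T]$, hence a supermartingale; therefore $\bbe[S_T^i Z_T]\leq S_0^iZ_0$, which already gives $S_0^{\vartheta}\leq S_0^i$. To obtain strict inequality, I would invoke the standard fact that a nonnegative supermartingale on $[0,T]$ whose expectation is constant is actually a martingale: if $\bbe[S_T^iZ_T]=S_0^iZ_0$, then by $\bbe[S_0^iZ_0]\geq\bbe[S_t^iZ_t]\geq\bbe[S_T^iZ_T]$ for every $t\in[0,T]$, the function $t\mapsto\bbe[S_t^iZ_t]$ is constant, and a supermartingale with constant expectation is a martingale. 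Combined with the integrability assumption $S_T^iZ_T\in\call^1$, this would force $S^iZ\in\calm$, contradicting the hypothesis. Hence $\bbe[S_T^iZ_T]<S_0^iZ_0$, so $S_0^{\vartheta}<S_0^i$, and $S^{\vartheta}$ witnesses a bubble.

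The potential obstacle is not really a computation but making sure the ``constant expectation implies martingale'' step is cleanly applied in the local-martingale/supermartingale setting on $[0,T]$; this is essentially the only non-cosmetic ingredient, and it is a classical consequence of the supermartingale property together with $\calf_0$-triviality. Everything else is direct from the formula defining $S^{\vartheta}$ and the hypothesis that $S^{\vartheta}\in\bbp_{\sfi}^+(\bbs)$, which is supplied in the statement.
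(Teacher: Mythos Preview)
Your proof is correct and follows the same route as the paper: both verify $S_T^{\vartheta}=S_T^i$ directly from the formula and obtain $S_0^{\vartheta}<S_0^i$ by combining the supermartingale property of the nonnegative local martingale $S^iZ$ with the hypothesis $S^iZ\notin\calm$. The only difference is cosmetic---the paper states the strict inequality $\bbe[S_T^iZ_T]<S_0^iZ_0$ without elaboration, while you spell out the ``constant expectation implies martingale'' step that underlies it.
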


\begin{proof}
By (\ref{port-for-bubble}) we have $S^{\vartheta} Z \in \calm$ and $S_T^{\vartheta} = S_T^i$. Furthermore, since $S^i Z \notin \calm$, we have $\bbe[S_0^i Z_0] > \bbe[S_T^i Z_T]$. Therefore, we obtain
\begin{align*}
S_0^{\vartheta} Z_0 = \bbe[ S_T^{\vartheta} Z_T ] = \bbe[S_T^i Z_T] < S_0^i Z_0,
\end{align*}
and hence $S_0^{\vartheta} < S_0^i$.
\end{proof}

We call every predictable, c\`{a}dl\`{a}g, finite variation process $B$ with $B_0 = 1$ and $B,B_- > 0$ a \emph{savings account}.

\begin{proposition}\label{prop-FTAP-bubbles}
Suppose that $S^d = B$ for some savings account $B$ such that $B$ and $B^{-1}$ are bounded. Furthermore, we assume there exists a local martingale $D > 0$ with $D_0 = 1$ such that the multiplicative special semimartingale $Z = D B^{-1}$ is an ELMD for $\bbs$. Then the following statements are true:
\begin{enumerate}
\item Nonnegative arbitrage portfolios, nonnegative (asymptotic) arbitrages of the first kind and nonnegative free lunches (with vanishing/bounded risk) do not exist.

\item If $D \in \calm$, then an admissible free lunch with vanishing risk does not exist. In particular, an admissible arbitrage portfolio does not exist.

\item If $D$ is the unique local martingale with $D_0 = 1$ such that $D B^{-1}$ is an ELMD for $\bbs$, then the existence of an admissible arbitrage portfolio (or, more generally, of an admissible free lunch with vanishing risk) implies that $D \notin \calm$.
\end{enumerate}
\end{proposition}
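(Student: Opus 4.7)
The three claims can be handled in sequence, leveraging progressively finer information about the deflator $Z = DB^{-1}$.

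For part (1), my first step will be to upgrade the ELMD property from $\bbs$ to the cone $\bbp_{\sfi}^+(\bbs)$ of all nonnegative self-financing portfolios. By the convention built into the definition of $\bbp_{\sfi}(\bbs)$, every nonnegative self-financing portfolio has a locally bounded strategy $\delta$, so the integration-by-parts computation from the proof of Proposition \ref{prop-ESMD-exists} (applied to the ELMD $Y := D$ for $\bar{\bbx}$, which one recovers from $Z = DB^{-1}$ after the change of numeraire) shows that $Z S^{\delta}$ is a local martingale for every $S^{\delta} \in \bbp_{\sfi}^+(\bbs)$; alternatively one may invoke \cite[Prop. 4.12]{Platen-Tappe-FTAP} directly, as was done in Lemma \ref{lemma-bubble}. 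Being nonnegative, $Z S^{\delta}$ is then a supermartingale, so $Z$ is an ESMD for the whole cone $\bbp_{\sfi}^+(\bbs)$ (and not merely for $\bbp_{\sfi}^{\delta \geq 0}(\bbs) \cap \bbp_{\sfi}^+(\bbs)$). The arguments in the proofs of Propositions \ref{prop-ESMD-NA}, \ref{prop-ESMD-NA-2} and \ref{prop-ESMD-NA-3} then carry over verbatim, since the hypothesis $\delta \geq 0$ is never used beyond the ESMD property of $Z$, and the three nonexistence claims follow.

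For part (2), the plan is to produce a genuine equivalent local martingale measure for the discounted market. Normalizing $D_0 = 1$, I would define $\bbq \sim \bbp$ by $d\bbq/d\bbp := D_T$, which is a bona fide probability measure because $D$ is a strictly positive true martingale. By Bayes' rule, $S^i/B$ is a $\bbq$-local martingale iff $(S^i/B) D = S^i Z$ is a $\bbp$-local martingale, and the latter is exactly the ELMD property. Hence $\bbq$ is an equivalent local martingale measure for the discounted market $\bar{\bbx}$. The easy direction of the Delbaen--Schachermayer fundamental theorem of asset pricing \cite{DS-94, DS-book}---existence of an ELMM rules out admissible free lunches with vanishing risk---then excludes admissible FLVR on $\bar{\bbx}$, and the boundedness of $B$ and $B^{-1}$ permits me to transfer the admissibility lower bound and the $L^{\infty}$-approximation between the two numeraires: multiplication by $B$ or $B^{-1}$ converts an admissibility bound $-a$ into $-a\|B^{\pm 1}\|_{\infty}$ and preserves $L^{\infty}$ convergence.

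Part (3) is the contrapositive of part (2): the existence of an admissible FLVR (or, a fortiori, of an admissible arbitrage portfolio) precludes $D \in \calm$, since otherwise part (2) would rule out such a FLVR. The uniqueness hypothesis on $D$ plays no role in the argument itself; it serves only to ensure that the statement ``$D \notin \calm$'' refers to an unambiguous object. I expect the main technical obstacle to lie in part (2), where one must check carefully that the admissibility constraint, the $L^{\infty}$-approximation condition, and the local-boundedness convention built into $\bbp_{\sfi}(\bbs)$ all transfer cleanly between the numeraires $1$ and $B$; parts (1) and (3) are comparatively routine once part (2) is in place.
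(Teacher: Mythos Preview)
Your argument is correct. The paper itself gives no proof beyond a bare citation to \cite[Thm.~4.3 and Thm.~4.7]{Platen-Tappe-FTAP}, so your proposal is necessarily more explicit than what appears there. In substance your route is the standard one and almost certainly coincides with what lies behind the cited theorems: for part~(1) you correctly observe that an ELMD for $\bbs$ deflates every nonnegative self-financing portfolio (via \cite[Prop.~4.12]{Platen-Tappe-FTAP}, exactly as in Lemma~\ref{lemma-bubble}), so that the supermartingale arguments of Propositions~\ref{prop-ESMD-NA}--\ref{prop-ESMD-NA-3} apply without the constraint $\delta \geq 0$; for part~(2) the density $D_T$ produces an equivalent local martingale measure for the discounted market, and the boundedness of $B$ and $B^{-1}$ lets you shuttle admissibility bounds and $L^{\infty}$-limits between num\'{e}raires; part~(3) is indeed the contrapositive of~(2), and you are right that the uniqueness hypothesis on $D$ is logically redundant---the contrapositive of~(2) already yields $D \notin \calm$ for \emph{every} qualifying $D$. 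What your write-up buys over the paper's is self-containment; what the citation buys is brevity and a pointer to a source where the num\'{e}raire-transfer details for NFLVR are worked out carefully.
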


\begin{proof}
These statements follow from \cite[Thm. 4.3 and Thm. 4.6]{Platen-Tappe-FTAP}.
\end{proof}

Recall that we are interested in market models satisfying NA for its nonnegative self-financing portfolios, but still admitting an arbitrage portfolio. The following remark summarizes our previous results.

\begin{remark}\label{rem-connection}
Suppose we are in the situation of Proposition \ref{prop-FTAP-bubbles}. Then the market $\bbs$ satisfies NUPBR, and hence, a nonnegative arbitrage portfolio does not exist. Furthermore, the local martingale $D$ is a supermartingale and a candidate for the density process of an equivalent risk-neutral probability measure $\bbq \approx \bbp$ for the discounted market $\{ S^1 / B, \ldots, S^{d-1} / B \}$, which exists if and only if $D$ is a true martingale. Consider the following three statements:
\begin{enumerate}
\item[(a)] There exists an admissible arbitrage portfolio.

\item[(b)] There exists an arbitrage portfolio.

\item[(c)] The market has a bubble.

\item[(d)] $D$ is a strict local martingale.
\end{enumerate}
Then the following statements are true:
\begin{enumerate}
\item We have the implications (a) $\Rightarrow$ (b) \& (d).

\item If $S^i,S_-^i > 0$ for all $i=1,\ldots,d-1$, then we have the implication (c) $\Rightarrow$ (b).

\item If $D$ is the unique local martingale with $D_0 = 1$ such that $D B^{-1}$ is an ELMD for $\bbs$, then we have the implication (d) $\Rightarrow$ (a).

\item If $S^1 Z, \ldots, S^{d-1} Z \in \calm$, then we have the implication (c) $\Rightarrow$ (d).

\item If there exists a self-financing portfolio $S^{\vartheta} \in \bbp_{\sfi}^+(\bbs)$ such that
\begin{align*}
S_t^{\vartheta} = \frac{\bbe[D_T | \calf_t]}{Z_t}, \quad t \in [0,T],
\end{align*}
then we have the implication (d) $\Rightarrow$ (c).

\end{enumerate}
\end{remark}

In the upcoming result, we have situations where nonnegative arbitrage portfolios do not exist and all the statements (a)--(d) hold.

\begin{proposition}\label{prop-strict-local}
Let $W$ be an $\bbr$-valued Wiener process, and assume that the filtration $(\calf_t)_{t \in [0,T]}$ is the right-continuous, completed filtration generated by $W$. Suppose that the market is of the form $\bbs = \{ S,1 \}$, containing a continuous semimartingale $S > 0$ such that $S^{-1} \in \calm_{\loc}$ with $S^{-1} \notin \calm$ and the quadratic variation $\la S^{-1} \ra$ are strictly increasing. Then the following statements are true:
\begin{enumerate}
\item A nonnegative arbitrage portfolio does not exist.

\item There exists an admissible arbitrage portfolio.

\item The market has a bubble.
\end{enumerate}
\end{proposition}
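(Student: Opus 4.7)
My plan is to apply Lemma \ref{lemma-bubble-suff} with the choices $Z := S^{-1}$ and $i = 2$. Three of its four hypotheses are immediate: $Z$ is an ELMD for $\bbs = \{S, 1\}$ because $SZ = 1 \in \calm$ and $1 \cdot Z = S^{-1} \in \calm_{\loc}$ are both strictly positive; the required condition $S^2 Z \notin \calm$ is exactly the standing assumption $S^{-1} \notin \calm$; and $S_T^2 Z_T = S_T^{-1}$ belongs to $\call^1$ because the nonnegative local martingale $S^{-1}$ is a supermartingale, whence $\bbe[S_T^{-1}] \leq S_0^{-1} < \infty$.

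The substantive step is to realize the process
\begin{align*}
t \mapsto \frac{\bbe[S_T^{-1}|\calf_t]}{Z_t} = S_t M_t, \qquad M_t := \bbe[S_T^{-1}|\calf_t],
\end{align*}
as the value of a self-financing portfolio $S^\vartheta \in \bbp_{\sfi}^+(\bbs)$. Here the Brownian filtration is essential: the martingale representation theorem yields predictable processes $\beta$ and $H$ such that $S^{-1} = S_0^{-1} + \beta \bdot W$ and $M = M_0 + H \bdot W$, and the strict monotonicity of $\la S^{-1}, S^{-1}\ra = \int_0^\cdot \beta_s^2\, ds$ guarantees that $\beta \neq 0$ Lebesgue-a.e., $\bbp$-almost surely, so $1/\beta$ is well defined as a predictable process. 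Itô's formula applied to $S = 1/(S^{-1})$ gives the decomposition $dS = -S^2\beta\, dW + S^3\beta^2\, dt$, and integration by parts together with $d\la S, M\ra = -S^2\beta H\, dt$ produces, after matching the $dW$- and $dt$-coefficients on both sides, the identity
\begin{align*}
d(SM) = \Bigl( M - \tfrac{H}{S\beta} \Bigr) dS.
\end{align*}
Setting $\vartheta^1 := M - H/(S\beta)$ and $\vartheta^2 := SM - \vartheta^1 S = H/\beta$ then defines a self-financing strategy with $\vartheta \cdot S = SM$, and the portfolio value $S^\vartheta = SM$ is nonnegative since $S > 0$ and $M \geq 0$.

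The main obstacle will be the technical verification that $\vartheta$ qualifies as a strategy in $\bbp_{\sfi}(\bbs)$, namely $S$-integrability in the sense of \cite{Shiryaev-Cherny} together with the local-boundedness clause triggered by $S^\vartheta \geq 0$. Integrability is built into the above derivation, but local boundedness of the components $M - H/(S\beta)$ and $H/\beta$ does not follow from continuity of $S$ and $M$ alone; it will require a standard localization along a sequence of stopping times $\tau_n \uparrow T$ chosen to control $|H|/|\beta|$ on $\IL 0, \tau_n \IR$. Once this technical point is settled, Lemma \ref{lemma-bubble-suff} applies directly and yields a bubble in the asset $S^2 = 1$, completing the proof.
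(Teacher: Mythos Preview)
Your proposal is correct and follows the same overall strategy as the paper: set $Z:=S^{-1}$, check the easy hypotheses of Lemma~\ref{lemma-bubble-suff}, and then use the Brownian martingale representation theorem together with the strict monotonicity of $\la Z,Z\ra$ to realize $t\mapsto S_t\,\bbe[Z_T\mid\calf_t]$ as the value of a nonnegative self-financing portfolio. You even arrive at exactly the same strategy: in the paper's notation $\vartheta=(\eta+M_0,\theta)$ with $\theta=K/H$ and $\eta=\theta\bdot Z-\theta Z$, which in your notation reads $\vartheta^1=M-H/(S\beta)$ and $\vartheta^2=H/\beta$.

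The only genuine difference is in the derivation of integrability. You apply It\^o's formula to $S=1/Z$ and integration by parts to $SM$, then match coefficients; this is correct, but the assertion that ``integrability is built in'' deserves one more line (e.g.\ $\int(\vartheta^1)^2\sigma_S^2\,dt=\int(MS^2\beta-SH)^2\,dt<\infty$ locally, using local boundedness of $M,S$ and $H,\beta\in L^2_{\loc}(W)$). The paper instead passes to the discounted market $\bar\bbs=\{1,Z\}$ and invokes the associativity theorem for stochastic integrals (\cite[Thm.~4.6]{Shiryaev-Cherny}) to obtain $\theta=K/H\in L^1_{\loc}(Z)$ directly from $K,\,\theta H=K\in L^1_{\loc}(W)$; it then transfers back to $\bbs$ via change of num\'eraire. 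This route sidesteps any explicit coefficient matching and yields integrability for free. As for the local-boundedness clause you flag at the end: the paper does not address it either, so your proof is no less complete than the paper's on that point; note, however, that your proposed localization of $|H|/|\beta|$ is not obviously available in general, since $H$ and $\beta$ need not be continuous.
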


\begin{proof}
Without loss of generality, we may assume that $S_0 = 1$. Setting $Z := D:= S^{-1}$, we have $Z \in \calm_{\loc}$. Thus $Z$ is an ELMD for $\bbs$, and hence, a nonnegative arbitrage portfolio does not exist.  

Now, let $Y$ be any ELMD for $\bbs$ with $Y_0 = 1$. By the martingale representation theorem (see \cite[Thm. III.4.33]{Jacod-Shiryaev}) there exist $\theta, \sigma \in L_{\loc}^2(W)$ and $A \in \calv$ such that $Y = \cale(-\theta \bdot W)$ and $S = \cale(A + \sigma \bdot W)$. Furthermore, by Yor's formula (see \cite[II.8.19]{Jacod-Shiryaev}) we obtain
\begin{align*}
YS = \cale \big( (\sigma - \theta) \bdot W + A - \sigma \theta \bdot \lambda \big),
\end{align*}
where $\lambda$ denotes the Lebesgue measure. Since $YS \in \calm_{\loc}$, there exists $a \in L_{\loc}^1(A)$ such that $A = a \bdot \lambda$ up to an evanescent set, and it follows that $\theta = a / \sigma$ $\lambda$-a.e. $\bbp$-a.e. Therefore, we have $Y = \cale(-(a / \sigma) \bdot W)$ up to an evanescent set. Since $Y$ was an arbitrary ELMD for $\bbs$ with $Y_0 = 1$, this proves the uniqueness of the ELMD $Z$. Hence, by Remark \ref{rem-connection} there exists an admissible arbitrage portfolio.

Since $Z$ is nonnegative, it is also a supermartingale, and hence we have $Z_T \in L^1$. We define the martingale $M \in \calm$ as
\begin{align*}
M_t := \bbe[Z_T | \calf_t], \quad t \in [0,T].
\end{align*}
By the martingale representation theorem (see \cite[Thm. III.4.33]{Jacod-Shiryaev}) there exist $H,K \in L_{\loc}^2(W)$ such that $Z = Z_0 + H \bdot W$ and $M = M_0 + K \bdot W$. Since the quadratic variation $\la Z \ra = H^2 \bdot \lambda$ is strictly increasing, we have $H \neq 0$ up to an evanescent set. We define the predictable process $\theta := \frac{K}{H}$. Since the Wiener process $W$ is continuous, we have $H, \theta H \in L_{\loc}^1(W)$. Therefore, by the first associativity theorem for vector stochastic integrals with respect to local martingales (proven as \cite[Thm. 4.6]{Shiryaev-Cherny} by using \cite[Lemma 4.4]{Shiryaev-Cherny}) we have $\theta \in L_{\loc}^1(Z)$ and
\begin{align*}
K \bdot W = (\theta H) \bdot W = \theta \bdot (H \bdot W) = \theta \bdot Z.
\end{align*}
Therefore, we have
\begin{align*}
M = M_0 + \theta \bdot Z.
\end{align*}
We define the $\bbr^2$-valued predictable process $\xi = (\eta,\theta)$ as
\begin{align*}
\eta := \theta \bdot Z - \theta Z.
\end{align*}
Introducing the discounted market $\bar{\bbs} = \{1,Z\}$, we have $\xi \in \Delta_{\sfi}(\bar{\bbs})$ and
\begin{align*}
M = M_0 + (1,Z)^{\xi} = M_0 + \eta + \theta Z.
\end{align*}
By \cite[Prop. 5.2]{Takaoka-Schweizer} we have $\xi \in \Delta_{\sfi}(\bbs)$. Furthermore, we obtain
\begin{align*}
MS = M_0 S + (M-M_0) S = M_0 S + (S,1)^{\xi} = M_0 S + \eta S + \theta.
\end{align*}
Now, we define the $\bbr^2$-valued predictable process $\vartheta$ as
\begin{align*}
\vartheta := \xi + M_0 e_1
\end{align*}
By \cite[Lemma 7.21]{Platen-Tappe-tvs} we have $\vartheta \in \Delta_{\sfi}(\bbs)$ and
\begin{align*}
(S,1)^{\vartheta} = (S,1)^{\xi} + M_0 S.
\end{align*}
Therefore, we have
\begin{align*}
(S,1)^{\vartheta} = MS,
\end{align*}
which means that
\begin{align*}
(S,1)_t^{\vartheta} = \frac{\bbe[Z_T | \calf_t]}{Z_t}, \quad t \in [0,T].
\end{align*}
Consequently, by Remark \ref{rem-connection} the market has a bubble.
\end{proof}

Let us mention well-known examples of a market $\bbs = \{ S,1 \}$, where $S$ is given as the power of a Bessel process.

\begin{example}
For a Bessel process $S > 0$ of dimension $\delta > 2$ its power $Z := S^{2-\delta}$, which is a solution to the SDE
\begin{align*}
dZ_t = (2-\delta)Z_t^{\frac{1-\delta}{2-\delta}} dW_t,
\end{align*}
is a strict local martingale; see \cite{Revuz-Yor}. Therefore, Proposition \ref{prop-strict-local} applies to the market $\bbs = \{ S^{\delta - 2}, 1 \}$.
\end{example}

Now, we provide an example of a financial market satisfying NA, and admitting an admissible arbitrage portfolio which can be computed explicitly.

\begin{example}\label{ex-3}
Consider the market $\bbs = \{ S,1 \}$, where $S$ is a Bessel process of dimension 3. By Proposition \ref{prop-strict-local} the following statements are true:
\begin{enumerate}
\item A nonnegative arbitrage portfolio does not exist.

\item There exists an admissible arbitrage portfolio.

\item The market has a bubble.
\end{enumerate}
The existence of arbitrage in this market model has been pointed out in \cite{DS-Bessel}, and explicit constructions of arbitrage portfolios are provided in \cite{Karatzas-Kardaras} and \cite{Ruf}. Assume that $S_0 = 1$ and $T=1$. Denoting by $\Phi : \bbr \to [0,1]$ the distribution function of the standard normal distribution ${\rm N}(0,1)$, we define the function $F : [0,1] \times \bbr \to [0,\frac{1}{\Phi(1)}]$ as
\begin{align*}
F(t,x) := \frac{\Phi(x / \sqrt{1-t})}{\Phi(1)},
\end{align*}
where for $t=1$ and $x$ positive this expression is understood to be $\frac{1}{\Phi(1)}$. Then, according to \cite[Example 4.6]{Karatzas-Kardaras}, an arbitrage portfolio $S^{\delta}$ is given by the self-financing strategy $\delta = (\theta,\eta)$ defined by the continuous processes
\begin{align}\label{strat-Bessel-1}
\theta_t &:= \frac{\partial}{\partial x} F(t,S_t), \quad t \in [0,1],
\\ \label{strat-Bessel-2} \eta &:= \theta \bdot S - \theta S,
\end{align}
and we have
\begin{align*}
S_t^{\delta} = F(t,S_t) - 1, \quad t \in [0,1].
\end{align*}
In particular, we see that this arbitrage portfolio is indeed admissible; that is, we have $S^{\delta} \in \bbp_{\sfi}^{\adm}(\bbs)$. However, this arbitrage portfolio cannot be nonnegative; that is, we have $S^{\delta} \notin \bbp_{\sfi}^+(\bbs)$. Moreover, in accordance with Theorem \ref{thm-main}, the arbitrage portfolio $S^{\delta}$ requires short selling; that is, we have $S^{\delta} \notin \bbp_{\sfi}^{\delta \geq 0}(\bbs)$. More precisely, as (\ref{strat-Bessel-1}) shows, we have a long position $\theta > 0$ in the risky primary security account $S$. However, by (\ref{strat-Bessel-2}) we have $\eta_0 = -\theta_0 < 0$, and hence, by the continuity of $\eta$, the stopping time $\tau := \inf \{ t \in [0,1] : \eta_t = 0 \}$ is strictly positive, which shows that we have the short position $\eta < 0$ on the stochastic interval $\IL 0,\tau \IL$.
\end{example}

\bibliographystyle{plain}

\bibliography{Finance}

\end{document}